\documentclass{llncs}

\usepackage{epic}
\usepackage{eepic}
\usepackage{amsmath}
\usepackage{amssymb}
\usepackage{color}
\usepackage{latexsym}
\usepackage{graphics}
\usepackage{exscale}
\usepackage{epsfig}
\usepackage{rotate}
\usepackage[latin1]{inputenc}

\newtheorem{fact}{Fact}

\newcommand{\np}{\mbox{\sf NP}}

\newcommand{\hide}[1]{}

\addtolength{\oddsidemargin}{-.3in}
\addtolength{\evensidemargin}{-.3in}
\addtolength{\textwidth}{0.6in}

\addtolength{\topmargin}{-.3in}
\addtolength{\textheight}{0.6in}

\title{Bounded-Distance Network Creation Games\thanks{This work was
partially supported by the PRIN 2008 research project COGENT
(COmputational and GamE-theoretic aspects of uncoordinated
NeTworks), funded by the Italian Ministry of Education,
University, and Research.}}
\author{Davide Bil\`o\inst{1} \and
Luciano Gual\`a\inst{2} \and Guido Proietti\inst{3,4}
\institute{Dip.to di Teorie e Ricerche dei Sistemi Culturali,
University of Sassari, Italy \and Dipartimento di Matematica,
University of Rome ``Tor Vergata'',  Italy \and Dipartimento di
Informatica, University of
  L'Aquila, Italy  \and Istituto di Analisi dei Sistemi
  ed Informatica,
  CNR, Rome, Italy \\
E-mail: {\tt davide.bilo@uniss.it; guala@mat.uniroma2.it;
guido.proietti@univaq.it} }}

\begin{document}

\pagestyle{plain}

\maketitle

\begin{abstract}
A \emph{network creation game} simulates a decentralized and non-cooperative building of a communication network. Informally, there are $n$ players 
sitting on the network nodes, which attempt to establish a reciprocal communication by activating, incurring a certain cost, 
any of their incident links. The goal of each player is to have all the other nodes as close as possible in the resulting network, while buying as few links as possible. According to this intuition, any model of the game must then appropriately address a balance between these two conflicting objectives. Motivated by the fact that a player might have a strong requirement about its centrality in the network, in this paper we introduce a new setting in which if a player maintains its (either \emph{maximum} or \emph{average}) distance to the other nodes within a given associated \emph{bound}, then its cost is simply equal to the \emph{number} of activated edges, otherwise its cost is unbounded.
We study the problem of understanding the structure of associated pure Nash equilibria of the resulting games, that we call \textsc{MaxBD} and
\textsc{SumBD}, respectively. 
For both games, we show that computing the best response of a player is an \np-hard problem.
Next, we show that when distance bounds associated with
players are \emph{non-uniform}, then equilibria can be arbitrarily bad.
On the other hand, for \textsc{MaxBD}, we show that when nodes have a \emph{uniform} bound $R$ on the maximum distance, then the \emph{Price of Anarchy} (PoA)
is lower and upper bounded by $2$ and $O\left(n^{\frac{1}{\lfloor\log_3
R\rfloor+1}}\right)$ for $R \ge 3$ (i.e., the PoA is constant as soon as the bound on the maximum distance is $\Omega(n^{\epsilon})$, for some $\epsilon>0$), while for the interesting case $R=2$, we are able to prove that the PoA is $\Omega(\sqrt{n})$ and $O(\sqrt{n \log n} )$. For the uniform \textsc{SumBD} we obtain similar (asymptotically) results, and moreover we show that the PoA becomes constant as soon as the bound on the average distance is $n^{\omega\left(\frac{1}{\sqrt{\log n}}\right)}$.
\end{abstract}
\smallskip

\indent{\bf Keywords:} Game Theory, NP-hardness, Nash Equilibria, Network Creation Game.
\newpage
\setcounter{page}{1}
\section{Introduction}
Communication networks are rapidly evolving towards a model in
which the constituting components (e.g., routers and links) are
activated and maintained by different owners, which one can
imagine as players sitting on the network nodes. When these
players act in a selfish way with the final intent of creating a
connected network, the challenge is exactly to understand whether
the pursuit of individual profit is compatible with the attainment
of an equilibrium status for the system (i.e., a status in which
players are not willing to move from), and how the social utility
for the system as a whole is affected by the selfish behavior of
the players. This task, which involves both computational and
economical issues of the system, is exactly the aim of a research
line which started with the seminal paper of Fabrikant \emph{et
al.} \cite{FLM03}, where the by now classic \emph{network creation
game} (NCG) was initially formalized and investigated.

\paragraph{Definition of the NCG.}
In its original formulation, the NCG is defined as follows: We are
given a set of $n$ players, say $V$, where the
strategy space of player $v \in V$ is the power set $2^{V \setminus\{v\}}$. Given
a combination of strategies $S=(S_v)_{v \in V}$, let $G(S)$ denote the underlying undirected
graph whose node set is $V$, and whose edge set is
$E(S)=\{\cup_{v \in V} (v \times S_v)\}$. Then, the \emph{cost} incurred by player
$v$ under $S$ is

\begin{equation}
\label{cost} \mathit{cost}_v(S) = \alpha \cdot |S_v| + \sum_{u \in V}
d_{G(S)}(u,v)
\end{equation}

\noindent where $d_{G(S)}(u,v)$ is the distance between nodes $u$
and $v$ in $G(S)$. Thus, the cost function implements the inherently antagonistic
goals of a player, which on the one hand
attempts to buy as little edges as possible, and on the other hand
aims to be as close as possible to the other nodes in the
outcoming network. These two criteria are suitably balanced in
(\ref{cost}) by making use of the parameter $\alpha \geq 0$.
Consequently, the \emph{Nash Equilibria}\footnote{In this paper, we only
focus on \emph{pure} strategies Nash equilibria.} (NE) space of the game is
heavily influenced by $\alpha$, and the corresponding
characterization must be given as a function of it. The state-of-the-art for the \emph{Price of Anarchy}
(PoA) of the game, that we will call henceforth \textsc{Sum}NCG,
is summarized in \cite{MS10}, where the most recent progresses on the problem have been reported.

\paragraph{Further NCG models.}
A criticism made to the classic NCG model is that the parameter $\alpha$ is in a sense exogenous to the system.
Moreover, usage and building cost are summed up together in the player's cost, and this mixing is reflected in the social cost
of the resulting network. As a consequence, we have that in this game the PoA alone does not say so much about the structural properties of the
network, such as density, diameter, or routing cost. This gave rise to a sequence of new NCG models. A first natural variant of \textsc{Sum}NCG was introduced in
\cite{DHM07}, where the authors redefined the player cost function
as follows

\begin{equation}
\label{max} \mathit{cost}_v(S) = \alpha \cdot |S_v| + \max \{d_{G(S)}(u,v):u \in V\}.
\end{equation}

\noindent This variant, named \textsc{Max}NCG, received further
attention in \cite{MS10}, where the authors improved the PoA of
the game on the whole range of values of $\alpha$. However, \textsc{Max}NCG still incorporates in its definition
the parameter $\alpha$. In an effort of defining new
parameter-free models, in \cite{LPR08} the authors proposed an
interesting variant in which a player $v$, when forming the
network, has a limited \emph{budget} $b_v$ to establish links to other
players. This way, the player cost function restricts to the usage cost, namely either the maximum or the
total distance to other nodes. In particular, in \cite{LPR08} the
authors focused on the latter measure. For this \emph{bounded-budget}
version of the game, that we call \textsc{Sum}BB, they showed that
determining the existence of NE is \np-hard. On a positive side,
they proved that for uniform budgets, say $k$, \textsc{Sum}BB
always admits a NE, and that its \emph{Price of Stability} (PoS)
is 1, while its PoA is $\Omega\left(\sqrt{\frac{n/k}{\log_k
n}}\right)$ and $O\left(\sqrt{\frac{n}{\log_k n}}\right)$. Notice
that in \textsc{Sum}BB, links are seen as directed. Thus, a
natural extension of the model was given in \cite{EFM11}, were the
undirected case was considered. For this, it was proven that both
\textsc{Max}BB and \textsc{Sum}BB always admit a NE. Moreover,
the authors showed that the PoA for \textsc{Max}BB and \textsc{Sum}BB is
$\Omega(\sqrt{\log n})$ and $2^{O(\sqrt{\log
n})}$, respectively, while in
the special case in which the budget is equal to 1 for all the
players, the PoA is $O(1)$ for both versions of the game.

In all the above models it must be noticed that, as stated in
\cite{FLM03}, for a player it is \np-hard to find a best response
once that the other players' strategies are fixed. To circumvent
this problem, in \cite{ADH10} the authors proposed a further
variant, called \emph{basic NCG} (BNCG), in which given some
existing network, the only improving transformations allowed are
\emph{edge swaps}, i.e., a player can only modify a \emph{single}
incident edge, by either replacing it with a new incident edge, or
by removing it. This naturally induces a weaker concept of
equilibrium for which a best response of a player can be computed
in polynomial time. In this setting, the authors were able to
give, among other results, an upper bound of $2^{O(\sqrt{\log
n})}$ for the PoA of \textsc{Sum}BNCG, and a lower bound of
$\Omega(\sqrt{n})$ for the PoA of \textsc{Max}BNCG. However,
as pointed out in \cite{MS10}, the fact that now an edge has not a
specific owner, prevents the possibility to establish any
implications on the PoA of the classic NCG, since a NE in a BNCG is not necessarily a NE of a NCG.

Finally, another NCG model which is barely related to the NCG model we study in this paper has been addressed in \cite{BS08}.

\paragraph{Our results.}
In this paper, we propose a new NCG variant that complements the
model proposed in \cite{EFM11}. More precisely, we assume that the
cost function of each player only consists of the number of bought
edges (without any budget on them), but with the additional constraint that a player $v$ needs to
connect to the network by staying within a given (either \emph{maximum} or \emph{average})
\emph{distance}, say (either $R_v$ or $D_v$), to the set of
players. Our model is motivated by the fact that in a realistic
scenario, a player might have a strong objective about its
centrality in the created network, and this can only be guaranteed by means of our approach.

For this bounded-distance version of the NCG, we address the problem of understanding the
structure of the NE associated with the two variants of the game,
that we denote by \textsc{MaxBD} and \textsc{SumBD}. To this
respect, we first show that both games can have an unbounded PoA
as soon as players hold at least two different distance bounds.
Moreover, in both games, computing a best response for a player is
\np-hard. These bad news are counterbalanced by the positive
result we get for \emph{uniform} distance bounds. In this case,
first of all, the PoS for \textsc{MaxBD} is equal to 1, while for
\textsc{SumBD} is at most equal to 2. Then, as far as the PoA is
concerned, let $R$ and $D$ denote the uniform bound on the maximum
and the average distance, respectively. We show that

\begin{enumerate}
\item[(i)] for \textsc{MaxBD}, the PoA is lower
and upper bounded by $2$ and $O\left(n^{\frac{1}{\lfloor\log_3
R\rfloor+1}}\right)$ for $R \ge 3$, while for $R=2$ is $\Omega(\sqrt{n})$ and $O(\sqrt{n \log n} )$; thus, the PoA is constant as soon as  $R=\Omega(n^{\epsilon})$, for some $\epsilon>0$;

\item[(ii)] for \textsc{SumBD}, the PoA is lower bounded by $2-\epsilon$, for any
$\epsilon > 0$, as soon as $D \geq 2-3/n$, while it is upper
bounded as reported in Table \ref{UBsum}.

\begin{table}
\begin{center}
\begin{tabular}{|l||c|c|c|c|c|}
\hline
{$D$}& $\in [2,3)$ & $\ge 3$ and $O(1)$ & $\omega(1) \cap O\left(3^{\sqrt{\log n}}\right)$ & $\omega\left(3^{\sqrt{\log n}}\right) \cap n^{O\left(\frac{1}{\sqrt{\log n}}\right)}$ & $n^{\omega\left(\frac{1}{\sqrt{\log n}}\right)}$\\
\hline
PoA&$O\left(\sqrt{n \log {n}}
\right)$ & $O\left(n^{\frac{1}{\lfloor\log_3
D/4\rfloor+2}}\right)$ &$2^{O\left(\sqrt{\log n}\right)}$ &$O\left(n^{\frac{1}{\lfloor\log_3
D/4\rfloor+2}}\right)$ & $O(1)$\\
\hline
\end{tabular}
\caption{Obtained PoA upper bounds for \textsc{SumBD}.}
\label{UBsum}
\end{center}
\end{table}
\end{enumerate}

The paper is organized as follows. After giving some basic
definitions in Section~\ref{sct:definitions}, we provide some preliminary results in Section~\ref{sct:preliminary}.
Then, we study upper and lower bounds for \textsc{MaxBD} and \textsc{SumBD} in Sections~\ref{sct:max} and~\ref{sct:sum}, respectively. Finally, in Section~\ref{sct:conclusions} we conclude the paper by discussing some intriguing relationships of our games with the famous graph-theoretic \emph{degree-diameter} problem. 

\section{Problem Definition}\label{sct:definitions}
\paragraph{Graph terminology.} Let $G=(V,E)$ be an
undirected (simple) graph with $n$ vertices. For a graph $G$, we
will also denote by $V(G)$ and $E(G)$ its set of vertices and its
set of edges, respectively. For every vertex $v \in V$, let
$N_G(v):=\{u \mid u\in V\setminus\{v\}, (u,v)\in E\}$. The {\em
minimum degree} of $G$ is equal to $\min_{v \in V}|N_G(v)|$.

We denote by $d_G(u,v)$ the {\em distance} in $G$ from $u$ to $v$.
The \emph{eccentricity} of a vertex $v$ in $G$, denoted by
$\varepsilon_G(v)$, is equal to $\max_{u \in V} d_G(u,v)$. The
{\em diameter} and the \emph{radius} of $G$ are equal to the
maximum and the minimum eccentricity of its nodes, respectively. A
node is said to be a \emph{center} of $G$ if $\varepsilon_G(v)$ is
equal to the radius of $G$. We define the \emph{broadcast cost} of
$v$ in $G$ as $B_G(v)=\sum_{u \in V}d_G(u,v)$, while the
\emph{average distance} from $v$ to a node in $G$ is denoted by
$D_G(v)=B_G(v)/n$.

A \emph{dominating set} of $G$ is a subset of nodes $U \subseteq
V$ such that every node of $V \setminus U$ is adjacent to some
node of $U$. We denote by $\gamma(G)$ the cardinality of a minimum
cardinality dominating set of $G$. Moreover, for any real $k \ge
1$, the $k$th power of $G$ is defined as the graph
$G^k=(V,E(G^k))$ where $E(G^k)$ contains an edge $(u,v)$ if and
only if $d_G(u,v) \le k$.

Let $U\subseteq V$ be a set of vertices, we denote by $G[U]$ the
subgraph of $G$ induced by $U$. Let $F\subseteq \{(u,v)\mid u,v \in V, u\neq v\}$.
We denote by $G+F$ (resp., $G-F$) the graph on $V$ with edge set
$E\cup F$ (resp., $E\setminus F$). When $F=\{e\}$ we will denote
$G+\{e\}$ (resp., $G-\{e\}$) by $G+e$ (resp., $G-e$). For two
graphs $G_1$ and $G_2$, we denote by $G_1\cup G_2$ the graph with
$V(G_1\cup G_2)=V(G_1)\cup V(G_2)$, and $E(G_1\cup G_2)=E(G_1)\cup
E(G_2)$.

\paragraph{Problem statements.}
The \emph{bounded maximum distance} NCG (\textsc{MaxBD}) is defined as
follows: Let $V$ be a set of $n$ nodes, each representing a
selfish player, and for any $v \in V$, let $R_v >0$ be an integer representing a bound on the eccentricity of $v$. The strategy of a player $v$ consists of a subset
$S_v \subseteq V \setminus \{v\}$. Denoting by $S$ the strategy profile of all players, let $G(S)$ be the undirected graph with
node set $V$, and with edge set $E(S)=\{\cup_{v \in V} (v \times S_v)\}$. When $u \in S_v$, we will say that $v$ is
buying the edge $(u,v)$, or that the edge $(u,v)$ is bought by $v$. Then, the cost of a player $v$ in $S$ is $\mathit{cost}_v(S)=|S_v|$
if $\varepsilon_{G(S)}(v) \le R_v$, $+\infty$ otherwise.

The \emph{bounded average distance} NCG (\textsc{SumBD}) is defined analogously, with a bound $D_v$ on the average distance, and cost function $\mathit{cost}_v(S)=|S_v|$ if $D_{G(S)}(v) \le D_v$, $+\infty$ otherwise. In the rest of the paper, depending on the context, we will interchangeably make use of the bound on the broadcast cost $B_v=D_v \cdot n$ when referring to \textsc{SumBD}.

In both variants, we say that a node $v$ is
\emph{within the bound in $S$} if $\mathit{cost}_v(S)<+\infty$. We measure the
overall quality of a graph $G(S)$ by its social cost
$\mathit{SC}(S)=\sum_{v \in V} \mathit{cost}_v(S)$. A graph $G(S)$ minimizing
$\mathit{SC}(S)$ is called \emph{social optimum}.

We use the Nash Equilibrium (NE) as solution concept. More
precisely, a NE is a strategy profile $S$ in which no player can
decrease its cost by changing its strategy assuming that the
strategies of the other players are fixed. When $S$ is a NE, we
will say that $G(S)$ is \emph{stable}, and that a graph $G$ is stable if
there exists a strategy profile $S$ such that $G=G(S)$. Notice that in
both games, when $S$ is a NE, all nodes are within the bound and, since
every edge is bought by a single player, $\mathit{SC}(S)$ coincides with
the number of edges of $G(S)$.

We conclude this section by recalling the definition of the two
measures we will use to characterize the NE space of our games,
namely the \emph{Price of Anarchy} (PoA) \cite{FLM03} and the
\emph{Price of Stability} (PoS) \cite{ADT03}, which are defined as
the ratio between the highest (respectively, the lowest) social
cost of a NE and the cost of a social optimum.

\section{Preliminary results}
\label{sct:preliminary}
First of all, observe that for \textsc{MaxBD} it is easy to see
that a stable graph always exists. Indeed, if there is at least
one node having distance bound 1, then the graph where all 1-bound
nodes buy edges towards all the other nodes is stable. Otherwise,
any spanning star is stable. Notice that any spanning star is stable for
\textsc{SumBD} as well, but only when all vertices have a bound of at least
$2n-3$, while the problem of understanding whether a NE always exists for the remaining values is open.
From these observations, we can derive the following negative
result:


\begin{theorem}
The PoA of \textsc{MaxBD} and \textsc{SumBD} (with distance bounds
$B_v \ge 2n-3$) is $\Omega(n)$, even for only two distance-bound
values.
\end{theorem}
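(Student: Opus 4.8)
The plan is to prove both lower bounds by exhibiting, for each game, a \emph{stable} graph with $\Theta(n^2)$ edges alongside a feasible graph with only $n-1$ edges, using merely two distinct bound values. Since in any NE the social cost equals the number of edges, and since any connected feasible graph has at least $n-1$ edges, this yields $\mathrm{PoA}=\Omega(n)$. The common skeleton is a graph $G^*$ with a \emph{core} $C=\{c_1,\dots,c_k\}$, $k=\lfloor n/2\rfloor$, plus \emph{private leaves} $p_1,\dots,p_k$, where each $p_i$ is adjacent only to $c_i$. In both games the social optimum is the spanning star centered at $c_1$: the center has eccentricity $1$ (broadcast $n-1$) and every leaf has eccentricity $2$ (broadcast $2n-3$), so the star is feasible as soon as all bounds are $\ge 2$ (resp.\ $\ge 2n-3$), giving an optimum of exactly $n-1$ edges.

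For \textsc{MaxBD} I would give every core node bound $2$ and every leaf bound $3$ (the two values), and take $C$ to be a \emph{clique}. Feasibility is immediate: each $c_i$ reaches the other cores and $p_i$ at distance $1$ and every other leaf $p_j$ at distance $2$ through the edge $c_ic_j$, so $\varepsilon_{G^*}(c_i)=2$, while each leaf has eccentricity $3$. The crux is stability. Since $p_j$ has $c_j$ as its unique neighbor, the only vertex through which $c_i$ can reach $p_j$ within distance $2$ is $c_j$; hence $\varepsilon_{G^*}(c_i)\le 2$ forces $c_i$ to be adjacent to \emph{every} $c_j$ and to buy the edge to its own leaf $p_i$. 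Thus the forced adjacency set of each $c_i$ is completely determined and maximal, so no core node can delete \emph{or reroute} any edge without pushing some $p_j$ to distance $\ge 3$. Therefore $G^*$ is a NE with $\binom{k}{2}+k=\Theta(n^2)$ edges, and this private-leaf argument is exactly what rules out the dangerous reroutings (not merely single-edge deletions).

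For \textsc{SumBD} the two values are $2n-3$ for the core nodes and a large value (say $3n$) for the leaves, both $\ge 2n-3$. Here the broadcast constraint is looser than an eccentricity-$2$ constraint, so I do \emph{not} expect the full clique to be stable: a core node has enough slack to let some leaves slip to distance $3$. The key point is that a private leaf $p_j$ still penalizes $c_i$ heavily unless $c_ic_j$ is present, since $c_ic_j\notin E$ implies $d(c_i,p_j)\ge 3$, a contribution of $3$ to the broadcast of $c_i$. A direct computation shows that even the cheapest strategy -- buying a \emph{dominating} set $A$ of the core, so that each non-neighboring core sits at distance $2$ and its leaf at distance $3$ -- gives broadcast $B(c_i)=5k-4-2|A|$, which is $\le 2n-3=4k-3$ only if $|A|\ge (k-1)/2$. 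Hence every core node is forced to buy $\Theta(n)$ edges \emph{regardless} of how it reroutes, and I would realize the equilibrium by taking the core graph $H$ to be any $\lceil (k-1)/2\rceil$-regular graph of diameter $2$, in which each $c_i$'s neighborhood dominates $C$ and its broadcast meets the bound. This is a NE with $\tfrac12 k\lceil(k-1)/2\rceil+k=\Theta(n^2)$ edges.

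I expect the main obstacle to be precisely this \textsc{SumBD} best-response analysis: unlike the eccentricity bound, the broadcast bound permits trading short-range distances for long-range ones, so I must verify that \emph{no} alternative neighbor set of size $o(n)$ keeps $B(c_i)\le 2n-3$. This is exactly what the ``distance-$3$ leaf'' accounting above rules out, and the remaining work is the routine verification that the chosen regular diameter-$2$ core graph $H$ exists and makes each core's displayed strategy an exact minimizer (handling parity and a consistent assignment of edge ownership), after which stability of the leaves -- which buy nothing and are trivially within their bound -- is immediate.
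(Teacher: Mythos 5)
Your proposal is correct, but for \textsc{SumBD} it follows a genuinely different route from the paper, so a comparison is in order. The paper uses a \emph{single} graph for both games: a clique of $k=n/5$ nodes with \emph{two} pendant paths of length $2$ attached to each clique node, with bounds $(3,5)$ for \textsc{MaxBD} and, for \textsc{SumBD}, a clique bound set \emph{exactly} to the current broadcast cost $11k-5$ (and $n^2$ for the pendant nodes); \textsc{SumBD}-stability there rests on the bound being tight, so that abandoning a clique vertex $u$ pushes the far pendant node $u^2_2$ to distance $\ge 4$, and the $+5$ lost per abandoned group cannot be recouped at the maximum $+2$ gained per newly bought edge (an accounting the paper compresses into ``similar arguments''). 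Your \textsc{MaxBD} half (clique plus one private leaf per core node, bounds $(2,3)$) is the same private-pendant mechanism in lighter clothing; note only that your claim ``the only vertex through which $c_i$ can reach $p_j$ within distance $2$ is $c_j$'' is slightly too strong, since $c_i$ could instead buy the edge $(c_i,p_j)$ directly --- what is true, and suffices for the count, is that each bought edge can serve at most one pair $\{c_j,p_j\}$, so no deviation with fewer edges is feasible. Your \textsc{SumBD} half is different in kind: rather than a tight bound on a clique, you give the cores slack ($B=2n-3$) and force degree by averaging --- the bound $B(c_i)\ge 5k-4-2|A|$ (which indeed remains valid when edges go to leaves, since any edge into pair $j$ saves at most $2$) forces $|A|\ge (k-1)/2$, realized by a $\lceil (k-1)/2\rceil$-regular core, where diameter $2$ and the domination property come for free from the degree (two non-adjacent cores satisfy $|N(u)|+|N(v)|\ge k-1>k-2$, hence share a neighbor). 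This degree-forcing mechanism is in fact the one the paper deploys in its PoS construction (Theorem~\ref{th:PoS}, complete multipartite graphs), not in this theorem. What each approach buys: the paper's tight-bound gadget is more economical, handling both games with one graph, but its \textsc{SumBD} stability argument is delicate precisely because the bound has no slack; your slack-based version makes the \textsc{SumBD} best-response analysis fully explicit and robust to the assignment of edge ownership (in-edges are absorbed by the regularity of the core), at the price of a second construction and the routine parity/existence details for the regular core graph, which you correctly flag as routine.
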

\begin{proof}
We will define a graph $G$ with $\Omega(n^2)$ edges, and we will prove
that $G$ is stable for both versions of the game. Then,
we will show that in both cases the cost of the social optimum is
$n-1$.

The graph $G$ is defined as follows. We have a clique of $k$
nodes. For each node $v$ of the clique, we add four nodes
$v^1_1,v^1_2,v^2_1,v^2_2$ and four edges $(v^1_2,v^1_1),
(v^1_1,v),(v^2_2,v^2_1), (v^2_1,v)$. Clearly, $G$ has $n=5k$ nodes
and $\Omega(n^2)$ edges. Now, consider a strategy profile $S$ with
$G=G(S)$ and such that (i) every edge is bought by a single
player, and (ii) the edges $(v^j_2,v^j_1), (v^j_1,v)$ are bought
by $v^j_2$ and $v^j_1$, respectively, $j=1,2$. Now, we show that
$S$ is a NE, once we have defined suitable bounds for the players.

Let us consider \textsc{MaxBD} first. We set the bound of every
node of the clique to 3, while all the other nodes have bound 5.
Trivially, all nodes are within the bound. Moreover, a node
$v^j_i$ is buying only one edge and, since the removal of such
edge disconnects the graph, $v^j_i$ cannot decreases its cost. Let
$v$ be a node of the clique, and assume that $v$ is buying $h$
edges in $S$. Let $S'$ be a strategy profile where $v$ switches
its strategy $S_v$ with $S'_v$ and such that $|S'_v|<h$. Since $h
\le n-1$, it must exist a vertex $u$ of the clique such that $u
\in S_v$ and $u,u^1_1,u^1_2,u^2_1,u^2_2 \notin S'_v$, from which
we have that $v$ cannot be within the bound in $S'$ since
$d_{G(S')}(v,u^2_2)
> 3$.

Concerning \textsc{SumBD}, we set the bound of each node $v$ of
the clique to $\sum_{u \in V}d_G(v,u)=11k-5 > 2n-3$, while we
assign to all the other nodes bound $n^2$. Similar arguments used
for \textsc{MaxBD} can be used to show that $S$ is a NE for
\textsc{SumBD} as well.

To conclude the proof, observe that any star (with cost $n-1$) is
a social optimum for the two instances of \textsc{MaxBD} and \textsc{SumBD} given above. \qed
\end{proof}

Given the above bad news, from now on we focus our attention on the \emph{uniform} case of the games, i.e., that in which all the bounds on the distances are the same, say $R$ and $D$ (i.e., $B=D \cdot n$) for the maximum and the average version, respectively. Similarly to other NCGs, also here we have the problem of computing a best response for a player, as stated in the following theorem.

\begin{theorem}
\label{th:best response}
Computing the best response of a player in \textsc{MaxBD} and \textsc{SumBD} is \np-hard.
\end{theorem}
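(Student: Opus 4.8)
The plan is to reduce from the classic \np-hard \textsc{Minimum Dominating Set} problem: given a graph $G'=(V',E')$ with $n'=|V'|$ and an integer $k$, decide whether $\gamma(G') \le k$. In both games I would fix the strategies of the other players so that the graph they build is exactly a copy of $G'$ on $V'$ (assigning each edge of $G'$ to be bought by one of its endpoints), add one extra node $v$ to which no other player buys an edge, and let $v$ be the player whose best response we compute. Then the only choice left to $v$ is which subset $S_v \subseteq V'$ of edges to buy, and in the resulting graph the distance from $v$ to a node $u \in V'$ is $1$ if $u \in S_v$, it is $2$ if $u$ is dominated by $S_v$ (adjacent in $G'$ to some element of $S_v$), and it is at least $3$ otherwise (possibly $+\infty$ if $G'$ is disconnected, which does no harm).

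For \textsc{MaxBD} I would take the uniform bound $R=2$. Then $v$ is within the bound precisely when every node of $V'$ lies at distance at most $2$ from $v$, i.e.\ precisely when $S_v$ is a dominating set of $G'$; hence the cheapest feasible strategy has cost exactly $\gamma(G')$. Consequently the best response of $v$ costs at most $k$ if and only if $\gamma(G') \le k$, so computing it would solve \textsc{Minimum Dominating Set}.

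For \textsc{SumBD} the correspondence is less immediate, and this is the step I expect to be the main obstacle: the broadcast constraint is additive, so $v$ may trade coverage for fewer links (leaving a few nodes far away while buying fewer edges), and minimizing $|S_v|$ now pulls against the constraint rather than aligning with it. To handle this I would set the uniform broadcast bound $B=2n'-k$ (assuming $k\le n'$, the problem being trivial otherwise) and use the following counting inequality: if $S_v$ has size $s$ and leaves $c$ nodes of $V'$ undominated, then classifying the nodes of $V'$ by their distance ($1$, $2$, or $\ge 3$) from $v$ gives $\sum_{u\in V'} d_{G(S)}(v,u) \ge s + 2(n'-s-c) + 3c = 2n'-s+c$, with equality exactly when $S_v$ dominates $V'$.

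From this inequality both directions follow. Feasibility of a size-$s$ strategy requires $2n'-s+c \le B = 2n'-k$, i.e.\ $s \ge k+c \ge k$, so no strategy of cost below $k$ is ever feasible; and when $\gamma(G')\le k$, a dominating set padded to size exactly $k\le n'$ has broadcast cost $2n'-k=B$ and is feasible, so the best response costs exactly $k$. Conversely, if $\gamma(G')>k$, then any strategy of size $s\le k$ cannot be dominating (else $\gamma(G')\le s\le k$), so $c\ge 1$ and feasibility would force $s\ge k+c\ge k+1>s$, a contradiction; hence every feasible strategy costs more than $k$. Thus again the best response costs at most $k$ if and only if $\gamma(G')\le k$, and computing it decides \textsc{Minimum Dominating Set}. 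The only routine checks left are that the instance is always feasible (buying all $n'$ edges gives broadcast $n'\le B$) and that disconnectedness of $G'$ only pushes undominated nodes farther, which merely strengthens the inequality.
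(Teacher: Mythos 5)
Your proof is correct, and it splits into two halves of different character relative to the paper. For \textsc{MaxBD} you essentially reproduce the paper's reduction: the paper also reduces from minimum dominating set via an isolated player attached to a copy of $G'$, except that it appends two paths of length $R-2$ to every vertex of $V'$ so that the argument works for an arbitrary uniform bound $R$, whereas your construction is the $R=2$ special case (which suffices for the theorem as stated, but gives less: the paper's gadget shows hardness for every fixed eccentricity bound). For \textsc{SumBD} you take a genuinely different route: the paper reduces from the \emph{$k$-median} problem (citing its \np-hardness on unweighted graphs), setting $B=\beta+N$ and exploiting that the broadcast cost of the new player who buys the set $U$ equals $N$ plus the median cost of $U$; you instead reduce again from dominating set, tuning the bound to $B=2n'-k$ and using the counting inequality $\sum_{u\in V'} d(v,u)\ge s+2(n'-s-c)+3c=2n'-s+c$ to show that feasibility forces $s\ge k+c$, so that a padded dominating set of size exactly $k$ is feasible iff $\gamma(G')\le k$. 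Your argument is more self-contained (one source problem for both games, no external hardness citation) and the padding/tightness analysis is carried out correctly, including the observations that $k\le n'$ may be assumed and that disconnection only strengthens the inequality; the paper's $k$-median reduction is shorter because the additive constraint matches the median objective directly, at the price of invoking the hardness result of Kariv and Hakimi. Both reductions are valid proofs of the theorem.
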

\begin{proof}
Let us consider \textsc{MaxBD} first. The reduction is from the \np-hard \emph{minimum
dominating set} problem which, given a graph $G'=(V',E')$, asks for
finding a dominating set of $G'$ of minimum cardinality, say
$\gamma(G')$. Let $N=|V'|$. We build a graph $G$ with $n=N+2N(R-2)
+ 1$ vertices as follows: We have an isolated vertex $u$, a copy
of $G'$, and two paths of
length $R-2$ appended to every vertex $v \in V'$. Now, let $S$ be the strategy profile such that
$G=G(S)$. Clearly, $\mathit{cost}_u(S)=+\infty$, and it is easy to see that
$u$ has a strategy yielding a cost of $k$ if and only if
$\gamma(G') \le k$.

Now, for \textsc{SumBD}, we sketch a reduction from the
\emph{$k$-median} problem. Let $G'=(V',E')$ be an instance of the
$k$-median problem which, given a value $\beta$, asks for finding
a subset $U \subseteq V$ of size $k$ such that $\sum_{v \in V}
\min_{u \in U}d_G(u,v) \le \beta$. This problem is \np-hard
even when $G'$ is an unweighted graph \cite{KH79}. Let $G$ be the
graph defined as $G'$ with an additional isolated node $u$, and
let $S$ be a strategy profile such that $G=G(S)$, and let
$B=\beta+N$, where $N=|V'|$. It is easy to see that $u$ has a
strategy yielding a cost of $k$ if and only if $G'$ has a
$k$-median of cost at most $\beta$. 
\qed
\end{proof}

On the other hand, a positive result which clearly implies that SUMBD always admits a pure NE is the following.

\begin{theorem}
\label{th:PoS}
The PoS of \textsc{MaxBD} is 1, while for \textsc{SumBD} is at most 2.
\end{theorem}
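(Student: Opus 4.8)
The plan is to exploit the fact, noted in the excerpt, that at any NE the social cost equals the number of bought edges: it therefore suffices to exhibit, for each game, a \emph{stable} graph whose number of edges is at most the required multiple of the minimum number of edges of a feasible (all-within-bound) graph. For \textsc{MaxBD} I aim for equality, and for \textsc{SumBD} for a factor of two.

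I would settle \textsc{MaxBD} first. The feasibility constraint ``every vertex has eccentricity at most $R$'' is equivalent to ``$G$ is connected and has diameter at most $R$''. For $R\ge 2$ any spanning star has diameter $2\le R$, so a social optimum is a spanning tree with exactly $n-1$ edges. I would then show that a star is itself stable by orienting it so that each leaf buys its unique edge to the center and the center buys nothing: a leaf owns a single edge whose removal disconnects it (making its eccentricity infinite) and cannot drop below one edge, while the center owns nothing and already has eccentricity $1\le R$. This profile is a NE of cost $n-1$, matching the optimum, so $\mathrm{PoS}=1$. The remaining case $R=1$ forces the complete graph as the unique feasible graph, which is trivially stable, so again $\mathrm{PoS}=1$.

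For \textsc{SumBD} the optimum is a minimum-edge connected graph $G^*$ in which every vertex $v$ satisfies $B_{G^*}(v)\le B$. When $B\ge 2n-3$ the star is feasible and, exactly as above, stable, giving $\mathrm{PoS}=1$; the real difficulty is the regime where the star is no longer feasible and $G^*$ is strictly denser than a tree, and where I want a stable graph using at most $2\,|E(G^*)|=\sum_v\deg_{G^*}(v)$ edges. The natural route is to orient $G^*$, assigning each edge to one endpoint, so that every purchased edge is \emph{necessary} for its owner, i.e.\ dropping it would push that owner's broadcast cost above $B$; edges tight for exactly one endpoint are oriented towards it, and edges tight for both are oriented arbitrarily.

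The main obstacle I expect is the existence of edges of $G^*$ that are critical for the whole graph — their removal violates some vertex's bound, since $G^*$ is edge-minimal — yet are needed by \emph{neither} endpoint, serving only as shortcuts for some third vertex $w$. Under any orientation such an edge is droppable by its owner, so $G^*$ itself need not be stable. I would repair this by letting each affected $w$ purchase a private replacement edge restoring the lost shortcut, and by charging every such replacement to a distinct edge of $G^*$, keeping the total below $2\,|E(G^*)|$. The two delicate points to nail down are (i) that this augmentation fits within the factor-$2$ budget, and (ii) that the final profile is a genuine NE — that no vertex can lower its cost by simultaneously dropping several owned edges and buying a cheaper alternative set. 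Establishing this last ``no beneficial swap'' property, as opposed to the mere necessity of individual edges, is the crux of the argument.
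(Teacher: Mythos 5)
Your \textsc{MaxBD} argument is correct and is essentially the paper's own proof (spanning star with each leaf buying its edge to the center for $R\ge 2$, complete graph for $R=1$), so that half is fine. The problem is the \textsc{SumBD} half which, as you yourself concede, is a plan rather than a proof, and the missing step is exactly the one your route cannot supply. Orienting an edge-minimal optimum $G^*$ so that every bought edge is individually necessary only rules out deviations that delete a single edge; a Nash equilibrium must survive deviations in which a player discards its \emph{entire} strategy and buys an arbitrary new edge set, and in an arbitrary edge-minimal feasible graph a vertex owning many oriented edges may well be able to replace them by fewer edges. Your repair step makes matters worse rather than better: every replacement edge you add shortens distances globally, so edges that were necessary for their owners may become droppable, triggering new repairs, and you give no argument that this cascade terminates nor that the charging stays within $2|E(G^*)|$. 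In short, neither of the two ``delicate points'' you flag is established, and together they constitute the whole content of the claim.

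The paper sidesteps all of this by never trying to stabilize the optimum. For $B=n-1+k$ with $0\le k\le n-2$ it builds a specific stable graph from scratch: a complete multipartite graph whose parts have size $k+1$ (plus a leftover part $V_0$ whose members buy nothing). The engine of the proof is the elementary observation that in any feasible graph $B_G(v)\ge \deg_G(v)+2(n-1-\deg_G(v))=2(n-1)-\deg_G(v)$, which forces $\deg_G(v)\ge n-1-k$ on \emph{every} vertex of \emph{every} feasible configuration. Since each buying vertex of the construction has degree exactly $n-1-k$, no deviation whatsoever --- single-edge or wholesale --- can reduce the number of edges it buys, so stability is immediate and requires no case analysis of alternative strategies; and the very same degree bound gives $\textsc{Opt}\ge n(n-1-k)/2$, from which $\mathit{SC}(S)\le 2\cdot\textsc{Opt}$ follows by direct computation (splitting into the cases $k<n/2$ and $k\ge n/2$). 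To salvage your approach you would need a structural theorem about edge-minimal graphs satisfying the broadcast bound that excludes all multi-edge swaps under some orientation, which is substantially harder than exhibiting one good equilibrium --- note that even the bare existence of equilibria for \textsc{SumBD} in this range is obtained in the paper only as a corollary of this explicit construction.
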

\begin{proof}
Concerning \textsc{MaxBD}, when $R=1$ the complete graph is a
social optimum as well as the only stable graph. For $R>1$, let
$T$ be a spanning star with center $c \in V$ and edges $(c,v)$, $v
\in V\setminus \{c\}$. Clearly, $T$ is a social optimum, and the
strategy profile $S$ in which $S_v=\{c\}$ for every $v \in
V\setminus \{c\}$, and $S_c=\emptyset$, is a NE.

Concerning \textsc{SumBD}, observe that such a $T$
is an optimum as well as stable when $B
\ge 2n-3$. Now assume that $B = n-1+k$ with $0 \le k \le n-2$. We
will define a graph $G$ with a number of edges that is at most
twice the number of edges of the optimum, and we show that it is
stable. Let $h,t \geq 0$ be s.t. $n=(k+1)h+t$. We partition $V$
into $h$ groups of $k+1$ nodes, say $V_1,\dots,V_h$ and, when $t
\neq 0$, an additional group $V_0$ of $t$ vertices. The edge set
of $G$ is defined as $\{(u,v) \mid u \in V_i, v \in V_j, i \neq j
\}$. Let $S$ be a strategy profile such that $G=G(S)$ with the
constraint that every node in $V_0$ buys no edge in $S$. Clearly,
every node in $G$ is within the bound. Moreover, observe that in
order to be within the bound, each node $v$ must have degree at
least $n-1-k$. Now, since every node not in $V_0$ has degree
exactly $n-1-k$, and since nodes in $V_0$ buy no edges, then
$G(S)$ is stable. To bound the social cost of $G$, notice that the
cost of the optimum, say \textsc{Opt}, is at least
$\frac{n(n-1-k)}{2}$. Let us consider the case in which $k < n/2$.
Then, $\textsc{Opt} \ge n^2/4$, while $\mathit{SC}(S) \le n^2/2
\le 2 \cdot \textsc{Opt}$. On the other hand, when $k \ge n/2$, we
have only two groups, one with $t=n-k-1$ nodes and the other with
$n-t$. Then, we have $\mathit{SC}(S)=t(n-t) \le t \, n \le 2 \cdot
\textsc{Opt}$. 
\qed
\end{proof}

We conclude this section by providing a lemma which will simplify the exposition of the remaining results.

\begin{lemma}\label{lm:aux}
Let $G(S)$ be a stable graph and let $H$ be a subgraph of $G(S)$.
If for each node $v$ there exists a set $E_v$ of edges (all
incident to $v$) such that $v$ is within the bound in $H+E_v$, then
$\mathit{SC}(S) \le |E(H)|+ \sum_{v \in V}|E_v|$.
\end{lemma}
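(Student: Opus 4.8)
The plan is to exploit the Nash equilibrium condition one player at a time, and then to sum the resulting inequalities. Recall that since $S$ is a NE every edge of $G(S)$ is bought by a single player, so $\mathit{SC}(S)=\sum_{v\in V}|S_v|$. The goal is therefore to charge, for each player $v$, its number of bought edges $|S_v|$ against the quantity $|H_v|+|E_v|$, where $H_v$ denotes the set of edges of $H$ that are bought by $v$ in $S$. Since $H$ is a subgraph of $G(S)$ and each edge of $G(S)$ has a unique owner, the sets $\{H_v\}_{v\in V}$ partition $E(H)$, and hence $\sum_{v\in V}|H_v|=|E(H)|$. Summing the per-player bounds then yields exactly $\mathit{SC}(S)\le |E(H)|+\sum_{v\in V}|E_v|$.

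To obtain the per-player bound, for a fixed $v$ I would consider the deviation in which $v$ changes its strategy so as to buy precisely the edges in $H_v\cup E_v$ (all of which are incident to $v$), while every other player keeps its strategy. Call $S'$ the resulting profile and $G(S')$ the resulting graph. The key step is to verify that $H+E_v$ is a subgraph of $G(S')$: the edges of $H$ bought by players other than $v$ survive the deviation untouched (only $v$ has changed strategy), the edges of $H$ bought by $v$ are exactly $H_v$ and are re-bought by $v$ in $S'$, and the edges $E_v$ are bought by $v$ in $S'$ as well. Consequently every distance in $G(S')$ is at most the corresponding distance in $H+E_v$, and since by hypothesis $v$ is within the bound in $H+E_v$, it is within the bound in $G(S')$ too.

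Thus the deviation to $S'$ gives $v$ a finite cost equal to $|S'_v|\le |H_v|+|E_v|$. Because $S$ is a NE, $v$ cannot strictly decrease its cost by deviating, so $|S_v|\le |S'_v|\le |H_v|+|E_v|$. Summing over all $v\in V$ and using $\sum_{v}|H_v|=|E(H)|$ completes the argument.

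The main obstacle, and the only genuinely delicate point, is the containment $H+E_v\subseteq G(S')$: one must be careful that re-buying only the $H$-edges owned by $v$ (rather than all of $v$'s original edges $S_v$) is enough to preserve the whole of $H$ in the deviated graph. This works precisely because the remaining edges of $H$ are owned by other players and are therefore unaffected by $v$'s unilateral move; in particular, no global consistency of the other players' bounds needs to be checked, since the NE inequality only concerns $v$'s own cost.
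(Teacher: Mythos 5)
Your proposal is correct and follows essentially the same argument as the paper's proof: deviate player $v$ to the strategy consisting of its own $H$-edges (your $H_v$, the paper's $k_v$) plus $E_v$, invoke the NE condition to get $|S_v|\le |H_v|+|E_v|$, and sum over players using $\sum_{v}|H_v|=|E(H)|$. The paper states the key containment $H+E_v\subseteq G(S')$ only implicitly, whereas you verify it explicitly, but this is a matter of exposition, not of approach.
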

\begin{proof}
Let $k_v$ be the number of edges of $H$ that $v$ is buying in $S$.
If $v$ buys $E_v$ additionally to its $k_v$ edges, then $v$ will
be within the bound. Hence, since $S$ is a NE, we have that
$\mathit{cost}_v(S) \le k_v+|E_v|$, from which it follows that:
$$
\mathit{SC}(S)=\sum_{v \in V}\mathit{cost}_v(S) \le \sum_{v \in V}k_v + \sum_{v \in
V}|E_v|= |E(H)|+\sum_{v \in V} |E_v|.
$$ \qed
\end{proof}

\section{PoA for \textsc{MaxBD}}
\label{sct:max}
\subsection{Upper bounds}
\begin{lemma}\label{lm:PoA le gamma}
Let $G(S)$ be a NE, and let $\gamma$ be the cardinality of a
minimum dominating set of $G(S)^{R-1}$, then $\mathit{SC}(S) \le
(\gamma+1)(n-1)$.
\end{lemma}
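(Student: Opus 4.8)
The plan is to invoke Lemma~\ref{lm:aux} with a carefully chosen backbone $H$ and single-node augmentations $E_v$. Write $G=G(S)$ and let $U\subseteq V$ be a minimum dominating set of $G^{R-1}$, so that $|U|=\gamma$ and every vertex $u$ lies within $G$-distance $R-1$ of some center in $U$; equivalently $d_G(u,U)\le R-1$.

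For the backbone, I would run a multi-source BFS in $G$ simultaneously from all the centers of $U$, and let $H$ be the resulting BFS forest. This forest spans $V$, consists of exactly $\gamma$ trees (one rooted at each center), and hence has $|E(H)|=n-\gamma\le n-1$ edges. Its crucial property is that each vertex $u$ is connected in $H$ to the root $\rho(u)\in U$ of its own tree by a shortest path, so that $d_H(u,\rho(u))=d_G(u,U)\le R-1$. Using $\gamma$ \emph{full} spanning trees here would cost $\gamma(n-1)$ edges and break the bound; the whole point of using a forest is to keep the backbone linear in $n$.

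For the augmentation, I would set $E_v=\{(v,c)\mid c\in U\setminus\{v\}\}$, i.e.\ let every vertex connect directly to all centers other than itself. To check that $v$ is within the bound in $H+E_v$, take any target $u$ and route through $\rho(u)$: if $\rho(u)\ne v$ then $d_{H+E_v}(v,u)\le 1+d_H(\rho(u),u)\le 1+(R-1)=R$, using the bought edge $(v,\rho(u))$ together with the tree path; if instead $\rho(u)=v$ (so $v$ is itself a center and $u$ lies in its own tree) then already $d_H(v,u)\le R-1$. In both cases the distance is at most $R$, so $\varepsilon_{H+E_v}(v)\le R$ and $v$ is within the bound.

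It then remains to add up. Since $|U\setminus\{v\}|$ equals $\gamma$ for $v\notin U$ and $\gamma-1$ for $v\in U$, we get $\sum_{v\in V}|E_v|=\gamma n-|U|=\gamma(n-1)$, and combined with $|E(H)|\le n-1$ Lemma~\ref{lm:aux} yields $\mathit{SC}(S)\le \gamma(n-1)+(n-1)=(\gamma+1)(n-1)$. I expect the only delicate point to be the construction of $H$: one must argue that a single multi-source BFS forest simultaneously realizes the $\gamma$-tree, $(n-\gamma)$-edge count \emph{and} the within-$(R-1)$-of-a-center distance guarantee, which is precisely what makes the backbone cheap enough to beat the target bound.
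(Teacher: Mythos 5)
Your proposal is correct and takes essentially the same route as the paper: the paper's proof also roots a spanning forest of $\gamma$ trees of height at most $R-1$ at the centers of a minimum dominating set of $G(S)^{R-1}$ (it merely asserts its existence, which your multi-source BFS makes explicit), uses the identical augmentation $E_v=\{(v,u)\mid u\in U\setminus\{v\}\}$, and concludes via Lemma~\ref{lm:aux}. The only difference is cosmetic: you compute $\sum_v|E_v|=\gamma(n-1)$ directly, whereas the paper splits the sum into $n-\gamma+(\gamma-1)\gamma+\gamma(n-\gamma)$ before bounding it by $(\gamma+1)(n-1)$.
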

\begin{proof}
Let $U$ be a minimum dominating set of $G(S)^{R-1}$, with
$\gamma=|U|$. It is easy to see that there is a spanning forest
$F$ of $G(S)$ consisting of $\gamma$ trees $T_1, \dots,
T_{\gamma}$, such that every $T_j$ contains exactly one vertex in
$U$, and when we root $T_j$ at such vertex the height of $T_j$ is
at most $R-1$.

For a node $v \in V$, let $E_v=\{(v,u) \mid u \in
U\setminus\{v\}\}$. Clearly, $v$ is within the bound in $F+E_v$, hence by
using Lemma \ref{lm:aux}, we have
$$
\mathit{SC}(S) \le |E(F)|+\sum_{u \in U}|E_u|+\sum_{v \in V\setminus
U}|E_v| = n-\gamma +(\gamma-1) \gamma+
\gamma(n-\gamma)\le(\gamma+1)(n-1).
$$ \qed
\end{proof}

Let $G(S)$ be a NE and let $v$ be a node of $G(S)$. 
Since $v$ is within the bound, the neigborhood of $v$ in $G$ is a dominating
set of $G^{R-1}$. Therefore, thanks to Lemma \ref{lm:PoA le gamma} we have proved the following corollary.

\begin{corollary}\label{lm:PoA le delta}
Let $G(S)$ be a NE, and let $\delta$ be the minimum degree of
$G(S)$, then $\mathit{SC}(S) \le (\delta+1)(n-1)$.
\qed
\end{corollary}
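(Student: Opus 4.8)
The plan is to obtain the corollary as an immediate consequence of Lemma~\ref{lm:PoA le gamma}: since that lemma already gives $\mathit{SC}(S) \le (\gamma+1)(n-1)$ for $\gamma$ the minimum dominating set size of $G(S)^{R-1}$, it suffices to prove the purely structural inequality $\gamma \le \delta$, where $\delta$ is the minimum degree of $G(S)$. Substituting $\gamma \le \delta$ then yields $\mathit{SC}(S) \le (\gamma+1)(n-1) \le (\delta+1)(n-1)$ at once.

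To establish $\gamma \le \delta$, I would fix a vertex $v$ realizing the minimum degree, so that $|N_{G(S)}(v)| = \delta$, and exhibit $N_{G(S)}(v)$ as a dominating set of $G(S)^{R-1}$; this bounds $\gamma$ from above by $\delta$. The single fact I would exploit is that, $S$ being a NE, $v$ is within the bound, hence $\varepsilon_{G(S)}(v) \le R$. I would then check the domination condition for an arbitrary $w \notin N_{G(S)}(v)$. If $w = v$, any neighbor $u \in N_{G(S)}(v)$ satisfies $d_{G(S)}(v,u) = 1 \le R-1$ (assuming $R \ge 2$; the case $R = 1$ is degenerate, the complete graph being the unique NE). Otherwise $d_{G(S)}(v,w) \ge 2$, and along a shortest $v$--$w$ path its second vertex $u$ lies in $N_{G(S)}(v)$ and satisfies $d_{G(S)}(u,w) = d_{G(S)}(v,w) - 1 \le R-1$, using $d_{G(S)}(v,w) \le \varepsilon_{G(S)}(v) \le R$. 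Either way $w$ is adjacent to a vertex of $N_{G(S)}(v)$ in $G(S)^{R-1}$, so $N_{G(S)}(v)$ dominates $G(S)^{R-1}$.

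The only step requiring any care is the domination claim, and even that reduces to the one-line shortest-path observation above, resting entirely on the eccentricity bound $\varepsilon_{G(S)}(v) \le R$ guaranteed at equilibrium. I therefore do not anticipate a genuine obstacle; the argument is essentially the inline remark preceding the corollary, made precise by choosing $v$ to be a minimum-degree vertex so that the dominating set produced has cardinality exactly $\delta$.
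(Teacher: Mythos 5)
Your proof is correct and follows exactly the paper's route: the paper derives the corollary from Lemma~\ref{lm:PoA le gamma} via the same observation that, at equilibrium, $\varepsilon_{G(S)}(v)\le R$ forces the neighborhood of any vertex (in particular a minimum-degree one) to dominate $G(S)^{R-1}$, giving $\gamma\le\delta$. Your write-up merely spells out the shortest-path detail and the degenerate case $R=1$ that the paper leaves implicit.
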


We are now ready to prove our upper bound to the PoA for the game.

\begin{theorem}\label{th:UB PoA Max}
The PoA of \textsc{MaxBD} is $O(n^{\frac{1}{\lfloor\log_3
R\rfloor+1}})$ for $R \ge 3$, and $O( \sqrt{n \log n} )$ for $R=2$.
\end{theorem}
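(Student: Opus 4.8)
The plan is to reduce everything to bounding the distance-$(R-1)$ domination number of a Nash equilibrium. By Theorem~\ref{th:PoS} the social optimum is a spanning star of cost $n-1$, so the PoA equals $\max_S \mathit{SC}(S)/(n-1)$ over all NE $S$; and by Lemma~\ref{lm:PoA le gamma} we have $\mathit{SC}(S)\le(\gamma+1)(n-1)$ with $\gamma=\gamma(G(S)^{R-1})$. Hence it suffices to show $\gamma(G(S)^{R-1})=O(n^{1/(\lfloor\log_3 R\rfloor+1)})$ for $R\ge 3$, and $\gamma(G(S))=O(\sqrt{n\log n})$ for $R=2$. I keep two complementary handles on $\gamma$. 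Corollary~\ref{lm:PoA le delta}, i.e.\ the fact that the neighbourhood of a minimum-degree vertex dominates $G(S)^{R-1}$, gives $\gamma\le\delta$, where $\delta$ denotes the minimum degree of $G(S)$. For the other direction I use a volume/packing bound: for any radius $r$, take a maximal set $P$ of vertices that are pairwise at distance more than $2r$ in $G(S)$; by maximality $P$ dominates $G(S)^{2r}$, and since the radius-$r$ balls centred at the points of $P$ are pairwise disjoint, $|P|\le n/m_r$, where $m_r:=\min_v|B_r(v)|$ is the smallest radius-$r$ ball. Thus $\gamma(G(S)^{2r})\le n/m_r$, and since $R-1\ge 2r$ only makes the graph power denser, $\gamma(G(S)^{R-1})\le n/m_r$.

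For $R=2$ the two scales coincide ($R-1=1$), so I combine $\gamma(G(S))\le\delta$ with the classical fact that any $n$-vertex graph of minimum degree $\delta$ admits a dominating set of size $O(n\ln\delta/\delta)$. Balancing the two estimates over $\delta$ (the worst case being $\delta\approx\sqrt{n\log n}$) yields $\gamma(G(S))=O(\sqrt{n\log n})$, matching the claimed $O(\sqrt{n\log n})$ bound.

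For $R\ge 3$ I iterate the packing bound across $t+1=\lfloor\log_3 R\rfloor+1$ geometrically spaced scales defined by $r_0=0$ and $r_{j+1}=3r_j+1$, so that $r_j=(3^j-1)/2$ and $2r_t=3^t-1\le R-1$. If $\delta\le n^{1/(t+1)}$ then $\gamma\le\delta$ already gives the bound. Otherwise I need the minimum ball size to grow geometrically, namely $m_{3r+1}\ge\delta\cdot m_r$; iterating this from $m_0=1$ gives $m_{r_t}\ge\delta^{\,t}>n^{t/(t+1)}$, and the packing bound then yields $\gamma(G(S)^{R-1})\le n/m_{r_t}<n^{1/(t+1)}$. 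Together with $\gamma\le\delta$ in the complementary regime this gives PoA $=O(n^{1/(\lfloor\log_3 R\rfloor+1)})$.

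The heart of the argument, and the step I expect to be the main obstacle, is the ball-growth lemma $m_{3r+1}\ge\delta\cdot m_r$: in a NE, tripling the radius multiplies the minimum ball size by at least the minimum degree. This is exactly where stability must be used beyond the bare diameter bound $\varepsilon_{G(S)}(v)\le R$. I will exploit that in a NE every bought edge is critical for its owner's eccentricity constraint (removing it would push some vertex past distance $R$), which forbids locally dense but globally useless structure; for instance a ``line of cliques'' is never stable, since a player could drop a redundant clique edge, so high degree at a vertex is forced to reach $\delta$ pairwise far-apart regions. Concretely, I will produce inside $B_{3r+1}(v)$ an $r$-packing of size at least $\delta$, whose pairwise disjoint $r$-balls are all contained in $B_{3r+1}(v)$ and hence give $|B_{3r+1}(v)|\ge\delta\cdot m_r$; translating the criticality of bought edges into the existence of such a packing is the delicate point, and everything else reduces to the packing and balancing computations above.
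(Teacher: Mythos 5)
Your reduction skeleton is sound and matches the paper's: by Lemma~\ref{lm:PoA le gamma} everything boils down to bounding $\gamma=\gamma(G(S)^{R-1})$, your packing bound $\gamma(G(S)^{2r})\le n/m_r$ is correct (and is how the paper converts ball growth into a bound on $\gamma$, via a maximal independent set of $G^{R-1}$ and balls of radius at most $(R-1)/2$), and your $R=2$ case is essentially verbatim the paper's (the Alon--Spencer bound $\gamma\le\frac{n}{\delta+1}H_{\delta+1}$ balanced against $\mathit{SC}(S)\le(\delta+1)(n-1)$). The genuine gap is exactly where you flag it: the ball-growth lemma $m_{3r+1}\ge\delta\cdot m_r$ is never proved, and the mechanism you propose for it is doubtful. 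Criticality of bought edges constrains only the \emph{buyer} of each edge, and a vertex need not buy anything (the center of a stable star buys no edge), so ``each bought edge reaches a separate far region'' gives you no handle on an arbitrary ball center $v$; moreover the minimum degree $\delta$ is a global quantity not necessarily witnessed anywhere near $B_{3r+1}(v)$, and nothing in single-edge criticality visibly prevents the $\delta$ incident edges of a vertex from serving overlapping ``directions.'' As written, the heart of the $R\ge3$ case is a conjecture with a hope attached.

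The paper closes this hole by proving growth with factor $\gamma$ instead of $\delta$, which is both easier and all that is needed, since the final bound is in terms of $\gamma$ anyway. Its claim is $\beta_{3k+1}\ge\min\{n,\gamma\beta_k\}$: take $T$ to be a maximal set of vertices at distance exactly $2k+1$ from $u$ with pairwise distances at least $2k+1$; by maximality every $v\notin\beta_{3k+1}(u)$ has some $t\in T$ with $d(t,v)<d(u,v)\le R$, so $T\cup\{u\}$ dominates $G^{R-1}$ (the only use of equilibrium here is that all players are within the bound --- no edge criticality at all), whence $|T|+1\ge\gamma$, and the pairwise disjoint radius-$k$ balls around $T\cup\{u\}$ inside $\beta_{3k+1}(u)$ give the factor. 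The base case is $\beta_1\ge\gamma$ because any vertex's neighborhood dominates $G^{R-1}$. Note also that your $\delta$-version is within a constant of this one for a trivial reason you could have exploited: in any NE, $\mathit{SC}(S)\ge n\delta/2$ while Lemma~\ref{lm:PoA le gamma} gives $\mathit{SC}(S)\le(\gamma+1)(n-1)$, so $\delta\le 2(\gamma+1)$; hence the repair is simply to run your iteration with growth factor $\gamma$ (capping at $n$, as you must anyway, and keeping $2r_t\le R-1$ as you do), which is precisely the paper's proof.
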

\begin{proof}
Let $G$ be a stable graph, and let $\gamma$ be the size of a
minimum dominating set of $G^{R-1}$. We define the \emph{ball} of radius
$k$ centered at a node $u$ as $\beta_k(u)=\{v \mid d_G(u,v) \le
k\}$. Moreover, let $\beta_k=\min_{u \in V} |\beta_k(u)|$. The
idea is to show that in $G$ the size of any ball increases quite
fast as the radius of the ball increases.
\begin{claim}
For any $k \ge 1$, we have $\beta_{3k+1} \ge \min \{n,\gamma
\beta_k\}$.
\end{claim}
\begin{proof}
Consider the ball $\beta_{3k+1}(u)$ centered at any given node
$u$, and assume that $|\beta_{3k+1}(u)| \le n$. Let $T$ be the
maximal set of nodes at distance exactly $2k+1$ from $u$ and
subject to the distance between any pair of nodes in $T$ being at
least $2k+1$. We claim that for every node $v \notin
\beta_{3k+1}(u)$, there is a vertex $t \in T$ with
$d_G(t,v)<d_G(u,v)$. Indeed, consider the node $t'$ in the
shortest path between $v$ and $u$ at distance exactly $2k+1$ from
$u$. If $t' \in T$ the claim trivially holds, otherwise consider
the node $t \in T$ that is closest to $t'$. From the maximality of
$T$ we have that $d_G(t,v) \le d_G(t,t')+d_G(t',v) \le
2k+d_G(u,v)-(2k+1) < d_G(u,v)$.

As a consequence, we have that $T\cup\{u\}$ is a dominating set of
$G^{R-1}$, and hence $|T|+1 \ge \gamma$. Moreover, all the balls
centered at nodes in $T\cup\{u\}$ with radius $k$ are all pairwise
disjoint. Then:
$$
|\beta_{3k+1}(u)|\ge |\beta_k(u)|+\sum_{t \in T} |\beta_k(t)| \ge
\gamma \beta_k.
$$
\qed
\end{proof}

Now, observe that since the neighborhood of any node is a
dominating set of $G^{R-1}$, we have that $\beta_1 \ge \gamma$.
Then, after using the above claim $x$ times, we obtain
$$
\beta_{\frac{3^{x+1}-1}{2}} \ge \min \{n,\gamma^{x+1}\}.
$$

Let us consider the case $R\ge 3$ first. Let $U$ be a maximal
independent set of $G^{R-1}$. Since $U$ is also a dominating set
of $G^{R-1}$, it holds that $|U| \ge \gamma$. We consider the
$|U|$ balls centered at nodes in $U$ with radius given by the
value of the parameter $x=\lfloor\log_3 R-1\rfloor$. Every ball
has radius at most $(R-1)/2$ and since $U$ is an independent set
of $G^{R-1}$, all balls are pairwise disjoint and hence we have $n
\ge |U| \gamma^{\lfloor\log_3 R-1\rfloor + 1} \ge
\gamma^{\lfloor\log_3 R\rfloor+1}$. As a consequence, we obtain
$\gamma \le n^{\frac{1}{\lfloor\log_3 R\rfloor+1}}$, and the claim
now follows from Lemma \ref{lm:PoA le gamma}.

Now assume $R=2$. We use the bound given in \cite{AS92} to the
size $\gamma(G)$ of a minimum dominating set of a graph $G$ with
$n$ nodes and minimum degree $\delta$, namely $\gamma(G) \le
\frac{n}{\delta +1}H_{\delta+1}$, where $H_i=\sum_{j=1}^i 1/j$ is
the $i$-th harmonic number. Hence, since a social optimum has cost
$n-1$, from Lemma \ref{lm:PoA le gamma} and Corollary \ref{lm:PoA
le delta}, we have $\frac{\mathit{SC}(S)}{n-1} \le \min\{\delta +
1, \frac{n}{\delta+1} H_{\delta+1}+1 \}$, for any stable graph
$G(S)$ with minimum degree $\delta$. The claim follows. \qed
\end{proof}

\subsection{Lower bounds}
We first prove a simple constant lower bound for any value of
$R=o(n)$, and then we show an almost tight lower bound of $\Omega(\sqrt{n})$ for $R=2$. We postpone to the concluding section a discussion on the difficulty of finding better lower bounds for large values of $R$.

\begin{theorem}
For any $\epsilon>0$ and for $1<R=o(n)$, the PoA for \textsc{MaxBD} is at least
$2-\epsilon$.
\end{theorem}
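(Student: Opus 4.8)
The plan is to reduce the claim to the construction of a single dense equilibrium. Recall from Theorem~\ref{th:PoS} that for $R>1$ a spanning star is a social optimum, so the optimal cost is exactly $n-1$; since in any NE the social cost equals the number of edges, it suffices to exhibit, for an increasing sequence of values of $n$, a stable graph $G$ with $|E(G)|\ge (2-\epsilon)(n-1)$. I would aim for a graph in which a $(1-o(1))$-fraction of the nodes are each \emph{forced} to buy exactly two edges, while the remaining $o(n)$ nodes buy nothing; this yields $|E(G)|=(2-o(1))n$ and hence a PoA of at least $2-\epsilon$.

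The mechanism I would build on is the one behind the complete bipartite graph $K_{2,n-2}$, which already settles the case $R=2$: the two centers act as \emph{free riders} (they buy nothing and already meet the bound $2$), while each of the $n-2$ leaves is obliged to buy both of its incident edges, because dropping either one leaves it at distance $3$ from the opposite center. The structural reason a single bought edge never suffices is that the target graph is \emph{self-centered}: if every node has eccentricity exactly $R$, then a node attached through a single neighbour $c$ inherits eccentricity at least $1+\varepsilon_G(c)=R+1>R$. Self-centeredness is therefore exactly the property I would engineer in order to force every payer to buy at least two edges.

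For general $1<R=o(n)$ the plan is thus to construct a self-centered graph of eccentricity $R$ consisting of a small \emph{core} whose internal structure realises distance $R$ cheaply (a backbone on $\Theta(R)=o(n)$ nodes), together with a large independent set of \emph{payers}, each attached to two core nodes chosen so that (i) the global eccentricity stays equal to $R$ and (ii) removing either of a payer's two edges creates a node at distance $R+1$. I would fix the ownership in which payers buy all their edges and core nodes buy none, check that every core node already meets its bound (so has no incentive to buy), and verify that no payer has a cheaper strategy. Counting then gives $2\cdot(\#\text{payers})+O(R)=(2-o(1))n$ edges, since the core contributes only $o(n)$ nodes.

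The hard part is property (ii) together with genuine best-response stability. The naive attachment (all payers joined to the same two extreme core nodes) fails as soon as $R\ge 3$: two payers sharing a core neighbour are at distance $2$, so a payer that drops an edge can still reach the ``lost'' core node through a neighbouring payer by a detour of length only $3\le R$, and the edge becomes redundant. Ruling out these \emph{payer--payer shortcuts} is the crux: I would have to distribute the attachments along the backbone so that every such detour has length exceeding $R$, essentially lifting the incidence/packing idea behind the sharper $R=2$ bound to a backbone of length $\Theta(R)$, while simultaneously preserving self-centeredness. A secondary technical point is that stability requires ruling out \emph{arbitrary} alternative strategies of a payer, not merely single-edge deletions; here self-centeredness helps again, since it gives a clean lower bound of two on the number of edges that any feasible strategy must contain.
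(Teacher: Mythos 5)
Your overall strategy is exactly the paper's: take a spanning star (cost $n-1$) as the social optimum, and exhibit a stable graph in which $n-o(n)$ ``payer'' nodes are each forced to buy two edges while an $o(n)$ core rides free; your $K_{2,n-2}$ instance is indeed a correct such construction for $R=2$. However, for general $R\ge 3$ your proposal has a genuine gap, and the obstruction you identify as ``the crux'' is in fact spurious. The paper's construction is precisely the naive attachment you dismiss: a path $u_1,\dots,u_{2R}$ (with $u_j$ buying the edge $(u_j,u_{j+1})$) and $h$ payers $v_i$, each buying the two edges $(v_i,u_1)$ and $(v_i,u_{2R})$. Your redundancy argument checks only the payer's distance to the \emph{lost endpoint}: after dropping $(v_i,u_{2R})$ the payer does reach $u_{2R}$ in $3$ steps through a neighbouring payer, but that is not the binding constraint. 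The binding constraint is the middle of the backbone: $d(v_i,u_{R+1})$ becomes $1+\min\{R,\,2+(R-1)\}=R+1>R$. The fix is \emph{sizing}, not redistribution: with a path on exactly $2R$ vertices, a payer holding both edges has eccentricity exactly $R$ (no slack), and every vertex of the graph minus one payer still has eccentricity $R$, so \emph{any} single-edge strategy yields eccentricity at least $R+1$; since the cost is just the number of bought edges, ruling out strategies of size at most $1$ suffices. Your detour objection applies only to an undersized backbone (attachment points at distance roughly $R$, which leaves slack in the payers' eccentricity), and the ``distribute the attachments along the backbone'' programme you propose in its place is both unnecessary and left unexecuted, so your argument does not actually prove the theorem for $R\ge 3$.

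Two smaller points. First, ownership: you stipulate that core nodes buy nothing, but then nobody buys the backbone's internal edges, which is impossible since in this game every edge must be bought by some player. The paper assigns $(u_j,u_{j+1})$ to $u_j$ and verifies these players cannot save: if $u_j$ drops its edge, its distance to $u_{j+1}$ becomes $(j-1)+2+(2R-j-1)=2R>R$, the detour through the payers being too long. Second, your self-centeredness argument (``a node attached through a single neighbour $c$ inherits eccentricity $1+\varepsilon_G(c)$'') must be applied in the graph formed by the \emph{other} players' edges, i.e.\ in $G(S_{\neg v})$, not in $G(S)$ itself; in both your $K_{2,n-2}$ and the paper's construction this is harmless because removing one payer leaves all remaining eccentricities equal to $R$, but the distinction is exactly where your payer--payer-shortcut worry should have been checked and discharged.
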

\begin{proof}
Assume we are given a set of $n=2R+h$ vertices $\{u_1,\dots,u_{2R}\}\cup\{v_1,\dots,v_h\}$. The strategy profile $S$ is defined as
follows. Vertex $u_j$ buys a single edge towards $u_{j+1}$, for
each $j=1,\dots,2R-1$, and every $v_i$ buys two edges towards
$u_1$ and $u_{2R}$. It is easy to see that $G(S)$ has diameter $R$
and is stable. The claim follows from the fact that $SC(S)$ goes
to $2(n-1)$ as $h$ goes to infinity and the fact that, as observed in Section~\ref{sct:preliminary}, a spanning star (having social cost equal to $n-1$) is a social optimum. \qed
\end{proof}

We close this section by providing a much stronger lower bound for
the special case in which $R=2$. Before stating the theorem, we give some additional notation.
Let $v$ be a player, $S$
be a strategy profile, and $\ell$ a positive integer. We define
$N_{S}^\ell(v)=\{u \mid u\in V, d_{G(S)}(u,v)\leq \ell\}$ and
$\bar N_{S}^\ell(v)=V\setminus N_S^\ell(v)$. We will omit the superscript $\ell$ when $\ell=1$. 
Moreover, we denote
by $\gamma(S,v,\ell)$ the size of a minimum cardinality set
$X\subseteq V$ of vertices that {\em dominates $\bar
N_{S}^\ell(v)$ in $G(S)^{\ell-1}$}, i.e., for every vertex $u \in
\bar N_{S}^\ell(v)$ there exists a vertex $x \in X$ such that
$d_{G(S)}(x,u)\leq \ell-1$, i.e., $(x,u)\in E(G(S)^{\ell-1})$.
Finally, denote by $S_{\neg v}$ the strategy profile where each player
but $v$ plays the same strategy as in $S$, while $v$ buys no edge,
i.e., the strategy of $v$ is $\emptyset$. The
following proposition, whose proof is straightforward, provides
exact bounds to the cost incurred by each player in every
connected graph, and will be used in the proof of the theorem.

\begin{proposition}\label{prop: bounds for S_v max game}
Let $G(S)$ be a connected graph. The cost incurred by each player
$v$ in $S$ in \textsc{MaxBD} with bound $R$ is
$|S_v|\geq\gamma(S_{\neg v},v,R)$. Moreover, a player $v$ is in equilibrium in $S$ iff 
$|S_v|=\gamma(S_{\neg v},v,R)$.
\end{proposition}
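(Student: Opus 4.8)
The plan is to pin down, for a fixed profile $S_{\neg v}$ of the other players, the smallest cost $v$ can attain by any unilateral choice of its strategy; once this best-response value is shown to equal $\gamma(S_{\neg v},v,R)$, the equilibrium characterization follows at once, since $v$ is in equilibrium exactly when $\mathit{cost}_v(S)$ realizes this minimum. The structural fact I would isolate first is that deleting $v$ removes all of its incident edges irrespective of who bought them, so that $G(S)[V\setminus\{v\}]=G(S_{\neg v})[V\setminus\{v\}]$. Consequently a shortest $v$--$u$ path in $G(S)$ can exploit $v$'s own purchases only in its very first hop, and the remainder of the path lives in this common vertex-deleted graph; this is the bridge that lets me translate between distances measured in $G(S)$ and in $G(S_{\neg v})$.

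Next I would establish $|S_v|\geq\gamma(S_{\neg v},v,R)$ for every $v$ that is within the bound, by showing that $S_v$ is itself a set dominating $\bar N_{S_{\neg v}}^R(v)$ in $G(S_{\neg v})^{R-1}$. Fix $u\in\bar N_{S_{\neg v}}^R(v)$, i.e.\ $d_{G(S_{\neg v})}(v,u)>R$. Since $v$ is within the bound we have $d_{G(S)}(v,u)\leq R$; let $x$ be the first vertex after $v$ on a shortest $v$--$u$ path in $G(S)$. The remainder of that path avoids $v$, hence lies in $G(S)[V\setminus\{v\}]=G(S_{\neg v})[V\setminus\{v\}]$, which gives $d_{G(S_{\neg v})}(x,u)\leq R-1$. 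Now $x$ cannot be a neighbour of $v$ already present in $G(S_{\neg v})$ (one bought by another player): if it were, the edge $(v,x)$ would yield $d_{G(S_{\neg v})}(v,u)\leq 1+(R-1)=R$, contradicting $u\in\bar N_{S_{\neg v}}^R(v)$. Therefore $x\in S_v$ and $x$ dominates $u$ in $G(S_{\neg v})^{R-1}$; letting $u$ range over $\bar N_{S_{\neg v}}^R(v)$ shows $S_v$ dominates this set, so $|S_v|\geq\gamma(S_{\neg v},v,R)$ by minimality of $\gamma$.

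Finally I would prove that this bound is achievable, thereby identifying the best-response value. Let $X$ be a minimum set dominating $\bar N_{S_{\neg v}}^R(v)$ in $G(S_{\neg v})^{R-1}$, with $|X|=\gamma(S_{\neg v},v,R)$, and let $v$ deviate to the strategy $X$. Relative to $G(S_{\neg v})$ this only adds edges, so every $u$ with $d_{G(S_{\neg v})}(v,u)\leq R$ stays within distance $R$, while every $u\in\bar N_{S_{\neg v}}^R(v)$ has some $x\in X$ with $d_{G(S_{\neg v})}(x,u)\leq R-1$, and the fresh edge $(v,x)$ then puts $u$ within distance $R$; hence $\varepsilon(v)\leq R$ at cost $|X|=\gamma(S_{\neg v},v,R)$. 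Combining the two directions, the minimum cost of $v$ is $\gamma(S_{\neg v},v,R)$, so $v$ is in equilibrium iff $\mathit{cost}_v(S)=\gamma(S_{\neg v},v,R)$, that is, iff $v$ is within the bound and $|S_v|=\gamma(S_{\neg v},v,R)$ (a player not within the bound has cost $+\infty$ and can strictly improve, so is never in equilibrium). The step I expect to need the most care is precisely the first-hop decomposition together with the identity $G(S)[V\setminus\{v\}]=G(S_{\neg v})[V\setminus\{v\}]$, since this is what forces the initial neighbour on a path to a far vertex to be freshly bought and thus converts the eccentricity constraint into a domination requirement.
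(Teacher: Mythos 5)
Your proof is correct and is exactly the argument the paper leaves implicit (the proposition is stated ``whose proof is straightforward'' with no proof given): the identity $G(S)[V\setminus\{v\}]=G(S_{\neg v})[V\setminus\{v\}]$ plus the first-hop decomposition shows $S_v$ dominates $\bar N_{S_{\neg v}}^R(v)$ in $G(S_{\neg v})^{R-1}$, giving the lower bound, and buying edges to a minimum such dominating set attains it, so the best-response value is $\gamma(S_{\neg v},v,R)$. You are in fact slightly more careful than the statement itself: you correctly observe that the ``iff'' must be read as requiring $v$ to be within the bound in addition to $|S_v|=\gamma(S_{\neg v},v,R)$ (a player with cost $+\infty$ can have $|S_v|=\gamma(S_{\neg v},v,R)$ yet strictly improve by deviating), a corner case the paper's phrasing glosses over.
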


Let $S'$ be a strategy profile for a set of players $V$. A strategy profile $S$ for $V$ {\em extends} $S'$ 
if $S'_v\subseteq S_v$ for every $v \in V$. Let $S$ and $S'$ be two strategy profiles for a set of players $V$ such that $S$ extends $S'$, 
For every $v \in V$ let $N(S',S,v)=N_S(v)\setminus N_{S'}(v)$ and let $S^{v,S'}$ be the strategy profile 
such that $S^{v,S'}_v=S'_v\cup N(S',S,v)$ and $S^{v,S'}_u=S'_u\cup \{v' \mid v' \in S_u, v'\neq v\}$ for each $u \in V, u \neq v$. 
Observe that $G(S^{v,S'})=G(S)$.

The following proposition will be also used in the proof of the theorem.

\begin{proposition}\label{prop: sufficient conditions for NE}
Let $V$ be a set of players and let $S,S'$ be two strategy profiles for $V$ such that $S$ extends $S'$. 
If every player $v$ is in equilibium in $S^{v,S'}$, then $G(S)$ is a stable graph.
\end{proposition}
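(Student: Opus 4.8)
The plan is to turn the whole family of witness profiles $\{S^{v,S'}\}_{v}$ into a single Nash equilibrium $T$ with $G(T)=G(S)$. Since $S$ extends $S'$, the edges of $G(S)$ split into the \emph{old} edges, already present in $G(S')$, and the \emph{new} edges in $E(S)\setminus E(S')$. I would define $T$ by keeping, for each old edge, the same buyer it has under $S'$, and by assigning each new edge to one of its two endpoints as its buyer, chosen arbitrarily. It is then immediate that $G(T)=G(S)$ and that $T$ uses single edge ownership, so $\mathit{SC}(T)=|E(G(S))|$; what remains is to certify that \emph{every} player is in equilibrium in $T$.

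The central tool is a monotonicity/Lipschitz property of best responses. Fix a player $v$ and freeze the part of $G(S)$ not incident to $v$, which is the same in every profile realizing $G(S)$. For a set $A$ of neighbours that others provide to $v$ for free, let $\phi(A)$ be the minimum number of edges $v$ must buy to stay within the bound; by Proposition~\ref{prop: bounds for S_v max game}, $v$ is in equilibrium in a profile exactly when the number of edges it buys equals $\phi$ evaluated at its free set. Two facts about $\phi$ drive the argument: it is monotonically non-increasing in $A$ (extra free edges never hurt), and it is $1$-Lipschitz, i.e. $\phi(A)-1\le\phi(A\cup\{z\})\le\phi(A)$ for one extra free neighbour $z$. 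The lower bound is the crux: given an optimal purchase $X$ against the larger free set $A\cup\{z\}$, buying $X\cup\{z\}$ against the smaller free set $A$ realizes the same neighbourhood of $v$, hence the same graph, so $\phi(A)\le\phi(A\cup\{z\})+1$.

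Next I would read off what the hypothesis gives. In $S^{v,S'}$ the neighbours that others buy towards $v$ are exactly $A^v:=\{u\mid v\in S'_u\}$, since there $v$ is charged with all of its new incident edges $N(S',S,v)$ on top of its old purchases $S'_v$. Hence, writing $c_v:=|S^{v,S'}_v|$, Proposition~\ref{prop: bounds for S_v max game} and the assumed equilibrium of $v$ in $S^{v,S'}$ amount to the single equality $\phi(A^v)=c_v$.

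Finally I would combine these. Passing from $S^{v,S'}$ to $T$, player $v$ only hands over to its neighbours the set $D_v$ of new incident edges that were assigned to their other endpoint, keeping all old ownership; thus $T_v=S^{v,S'}_v\setminus D_v$ and the free set of $v$ in $T$ is the disjoint union $A^v\sqcup D_v$. On one hand $\phi(A^v\sqcup D_v)\le|T_v|=c_v-|D_v|$, because $v$ can buy its $T$-edges and thereby recover its whole neighbourhood in $G(S)$, where it is within the bound. On the other hand, iterating the $1$-Lipschitz lower bound $|D_v|$ times gives $\phi(A^v\sqcup D_v)\ge\phi(A^v)-|D_v|=c_v-|D_v|$. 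The two bounds coincide, so $\phi(A^v\sqcup D_v)=|T_v|$ and, by Proposition~\ref{prop: bounds for S_v max game}, $v$ is in equilibrium in $T$; as $v$ was arbitrary, $T$ is a NE realizing $G(S)$, so $G(S)$ is stable. I expect the only real work to be the clean statement and proof of the monotone, $1$-Lipschitz behaviour of $\phi$ (the lower bound via the swap that trades a free edge for a bought edge), together with the bookkeeping identifying $A^v$ and $T_v=S^{v,S'}_v\setminus D_v$; once these are in place the sandwich forces equilibrium for \emph{every} orientation of the new edges.
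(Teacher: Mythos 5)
Your argument is correct, but it takes a genuinely different route from the paper's. The paper proceeds by contradiction: if $G(S)$ were not stable, then in particular some player $u$ would have an improving deviation $S''$ from $S$; setting $X=S''_u\setminus S_u$ and $Y=S_u\setminus S''_u$ (so $|X|<|Y|$), it transplants this deviation into $S^{u,S'}$ via $\bar S_u=(S^{u,S'}_u\setminus Y)\cup X$, the crucial observation being that $Y\subseteq S_u\subseteq S'_u\cup N(S',S,u)=S^{u,S'}_u$ (in $S^{u,S'}$ player $u$ owns every edge it owns in $S$), so that $G(\bar S)=G(S'')$ while $u$ buys strictly fewer edges than in $S^{u,S'}$ --- contradicting, via Proposition~\ref{prop: bounds for S_v max game}, the assumed equilibrium of $u$ in $S^{u,S'}$; this pins down $S$ itself as a NE. You instead argue constructively: you promote the quantity $\gamma(\cdot,v,R)$ of Proposition~\ref{prop: bounds for S_v max game} to a function $\phi(A)$ of the free neighbour set $A$ (legitimate, since every profile you consider realizes $G(S)$, so the part of the graph not incident to $v$ is indeed fixed), prove monotonicity and the $1$-Lipschitz lower bound $\phi(A)\le\phi(A\cup\{z\})+1$ by the swap trading a free edge for a bought one, and then sandwich $\phi(A^v\sqcup D_v)$ between $c_v-|D_v|$ from both sides; the upper bound correctly uses that $v$ is within the bound in $G(S)$, which does follow from its equilibrium in $S^{v,S'}$ (a finite-cost strategy always exists, so equilibrium cost is finite). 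Your bookkeeping identities ($A^v=\{u\mid v\in S'_u\}$, $T_v=S^{v,S'}_v\setminus D_v$, disjointness of $A^v$ and $D_v$ because $D_v$ consists of new edges) all check out. Both proofs are sound and of comparable weight; yours buys a strictly stronger conclusion --- \emph{every} single-ownership assignment $T$ of the new edges on top of the $S'$-ownership of the old ones is itself a NE, for any choice of the sets $D_v$ --- and isolates a reusable monotone/Lipschitz structure of best responses, whereas the paper's contradiction is leaner, needs no auxiliary function, and establishes that the given profile $S$ (rather than a re-oriented one) is the equilibrium realizing $G(S)$.
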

\begin{proof}
For the sake of contradiction, assume that every player $v \in V$ is in equilibium in $S^{v,S'}$ but $G(S)$ is not stable. Then there
exists a player $u$ and a strategy profile $S''$ such that (i) $S''_v=S_v$ for every $v \in V, v\neq u$, 
(ii) the eccentricity of $u$ in $G(S'')$ is less than or equal to $R$, and (iii) $|S''_u|<|S_u|$.
 Let $X=\{x\mid x \in S''_u, x \not\in S_u\}$ and let $Y=\{y\mid y \in S_u, y\not\in S''_u\}$.
By (iii) we have $|X|<|Y|$. Let $\bar S$ be the strategy profile such that $\bar S_u=(S^{u,S'}_u\setminus Y)\cup X$ and $\bar S_v=S^{u,S'}_v$ for every $v \in V,v\neq u$.
Clearly, $G(S'')=G(\bar S)$ and thus, by (ii) the eccentricity of $u$ in $G(\bar S)$ is less than or equal to 2. Furthermore, $|X|<|Y|$ implies $|\bar S_u|\leq |S^{u,S'}_u|$ and 
therefore $u$ is not in equilibrium in $S^{u,S'}$.
 \qed
\end{proof}

We are now ready to prove the following.

\begin{theorem}
\label{th:PoA R=2}
The PoA of \textsc{MaxBD} for $R=2$ is $\Omega(\sqrt{n})$.
\end{theorem}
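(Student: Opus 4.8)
The plan is to exhibit a specific stable graph $G(S)$ on $n$ vertices having diameter $2$ and $\Omega(n^{3/2})$ edges. Since a spanning star is a social optimum of cost $n-1$ (as observed in Section~\ref{sct:preliminary}), and since for a NE the social cost equals the number of edges, such a graph immediately yields $\mathrm{PoA}=\Omega(n^{3/2}/(n-1))=\Omega(\sqrt n)$. Thus the whole argument reduces to (a) describing a dense diameter-$2$ graph $G$ together with an ownership $S$ of its edges in which every player buys $\Theta(\sqrt n)$ edges, and (b) certifying that $S$ is a NE.

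For step (b) I would avoid reasoning about arbitrary global deviations and instead invoke the machinery already developed. Choosing as base profile $S'$ the subset of structurally forced edges and letting $S$ extend it, Proposition~\ref{prop: sufficient conditions for NE} reduces global stability to the per-player statement that each $v$ is in equilibrium in $S^{v,S'}$. By Proposition~\ref{prop: bounds for S_v max game} this in turn amounts to verifying, for every player $v$, the single identity $|S_v|=\gamma(S_{\neg v},v,2)$: the number of edges $v$ buys must equal the minimum number of vertices needed to dominate its far-set $\bar N_{S_{\neg v}}^2(v)$ in $G(S)$, computed after deleting only the edges bought by $v$ (those bought by others towards $v$ remain). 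The inequality $|S_v|\ge\gamma(S_{\neg v},v,2)$ is automatic from the proposition; the content is the matching lower bound $\gamma(S_{\neg v},v,2)\ge|S_v|=\Theta(\sqrt n)$ for every $v$.

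The crux, and the main obstacle, is designing $G$ so that each player's $\Theta(\sqrt n)$ edges are simultaneously essential. Dense diameter-$2$ graphs are typically robust: after a vertex drops all of its own edges, the edges bought by the other players often still keep it within distance $2$ of everything, so its best response collapses to a handful of edges and the profile fails to be a NE; this is exactly what rules out naive dense candidates such as complete multipartite or rook-type graphs. To defeat this, the construction must endow every player $v$ with $\Theta(\sqrt n)$ \emph{private witnesses}: for each edge $(v,w)$ bought by $v$ there must be a vertex $x_w$ whose only distance-$\le 2$ route to $v$ passes through $w$, i.e.\ $x_w$ is non-adjacent to $v$ and to every other neighbour of $v$, so that these witnesses cannot be covered cheaply. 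I would realize this through a symmetric incidence/design structure (of the kind supplied by a projective plane of order $q\approx\sqrt n$) in which neighbourhoods of distinct vertices overlap in a tightly controlled way, guaranteeing that $\bar N_{S_{\neg v}}^2(v)$ is a spread-out family requiring $\Theta(\sqrt n)$ distinct dominators. Verifying this covering lower bound, equivalently that no $o(\sqrt n)$ vertices dominate $v$'s far-set, together with checking that the resulting graph indeed has diameter $2$ and the claimed $\Omega(n^{3/2})$ edges, is where essentially all of the work lies.
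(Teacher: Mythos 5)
Your high-level plan coincides exactly with the paper's: the paper also reduces the theorem to exhibiting a diameter-$2$ graph with $\Theta(p^2)$ vertices and $\Omega(p^3)$ edges, fixes a base profile $S'$ of forced edges, and certifies stability player-by-player via Proposition~\ref{prop: sufficient conditions for NE} by checking the identity $|S^{v,S'}_v|=\gamma(S^{v,S'}_{\neg v},v,2)$ from Proposition~\ref{prop: bounds for S_v max game}. Your instinct that the construction should be a design-like incidence structure of order $\approx\sqrt n$ is also the paper's: its cross edges $(u_{i,j},v_{i',j'})$ with $j+i'i\equiv j'\pmod p$, $p$ prime, are precisely the lines of an affine plane over $\mathbb{Z}_p$. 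However, there is a genuine gap: you stop exactly where the work begins. The proposal never produces the graph, never verifies diameter $2$ (in the paper this itself needs the primality of $p$, to solve $i''(i-i')\equiv j-j'\pmod p$ so that any two leaves in different groups share a common neighbour), and never establishes the covering lower bound, which you explicitly defer. As a blind proof of Theorem~\ref{th:PoA R=2} it is therefore a plausible research program, not a proof.

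Moreover, the one concrete mechanism you do propose --- ``private witnesses,'' i.e.\ for each bought edge $(v,w)$ a vertex $x_w$ whose \emph{only} distance-$\le 2$ route to $v$ passes through $w$ --- is both stronger than what the paper's graph satisfies and, as you yourself half-acknowledge, not sufficient for equilibrium: per-edge essentiality rules out dropping one edge, but a deviator may discard all of its edges and buy a smaller, entirely different set, so what must be bounded is $\gamma(S^{v,S'}_{\neg v},v,2)$ itself. The paper's argument is a global counting claim of a different shape: every vertex dominates at most $p$ leaves of the far leaf-set $\hat V$ (with $|\hat V|\in\{p^2,p^2-1,p(p-1)\}$), hence any dominating set has size at least $\lceil|\hat V|/p\rceil$; equality forces $X\subseteq C$ or $X\subseteq\bar U$ (sets mixing $k$ vertices of $C$ with vertices of $\bar U$ cover at most $p\lceil|\hat V|/p\rceil-k\lceil|\hat V|/p\rceil+k^2\le|\hat V|-1$ leaves, and any set touching $\hat V$ wastes a vertex); and such an extremal $X$ necessarily misses $r$ or $r'$, costing one extra vertex and yielding exactly $\gamma=p$ or $p+1$ as needed. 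Note also that in the paper's equilibrium not every player buys $\Theta(\sqrt n)$ edges ($r'$ buys none, the $\bar U$-leaves' and $C$-vertices' budgets are rebalanced through the $S^{v,S'}$ re-attribution), so your requirement that all players buy $\Theta(\sqrt n)$ edges is an over-constraint; only the total edge count matters. To complete your proof you would need to supply this entire second half.
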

\begin{proof}
Let $p\geq 3$ be a prime number. We provide a graph $G'$ of diameter 2 containing 
$O(p^2)$ vertices and $\Omega(p^3)$ edges and show that there exists a strategy profile $S$ 
such that $G(S)=G'$ and $G(S)$ is stable.
$G'$ contains two vertex-disjoint rooted trees
$T$ and $T'$ as subgraphs. $T$ is a complete $p$-ary tree of
height 2. We denote by $r$ the root of $T$, by
$C=\{c_0,\ldots,c_{p-1}\}$ the set of children of $r$, and by
$V_i=\{v_{i,0},\ldots,v_{i,p-1}\}$ the set of children of $c_i$.
$T'$ is a star with $p^2$ leaves rooted at the center $r'$. The
leaves of $T'$ are partitioned in $p$ groups  each having exactly
$p$ vertices. For every $i=0,\ldots,p-1$, we denote by
$U_i=\{u_{i,0},\ldots,u_{i,p-1}\}$ the set of vertices of group
$i$. $G'=(V,E)$ has vertex set $V=V(T)\cup V(T')$ and edge set (see also Figure \ref{fig:lower bound sqrt n})
\begin{eqnarray*}
E &=& E(T)\cup E(T')\cup\{(r,r')\}\\
&\cup &\big\{(c,c')\mid c,c'\in C, c \neq c'\big\}\\
&\cup& \bigcup_{i=0}^{p-1}\big\{(u,u')\mid u,u'\in U_i, u\neq u'\big\}\\
&\cup &\big\{(u_{i,j},v_{i',j'})\mid i,i',j,j'\in[p-1],
j+i'i\equiv j' \pmod{p}\big\}.
\end{eqnarray*}

\begin{figure}
    \begin{center}
        \setlength{\unitlength}{0.00083333in}
\begingroup\makeatletter\ifx\SetFigFont\undefined%
\gdef\SetFigFont#1#2#3#4#5{%
  \reset@font\fontsize{#1}{#2pt}%
  \fontfamily{#3}\fontseries{#4}\fontshape{#5}%
  \selectfont}%
\fi\endgroup%
{\renewcommand{\dashlinestretch}{30}
\begin{picture}(5695,2013)(0,-10)
\put(2294,533){\makebox(0,0)[lb]{{\SetFigFont{8}{9.6}{\rmdefault}{\mddefault}{\updefault}to $v_{0,j}$}}}
\put(3065.283,2233.088){\arc{3481.909}{0.8811}{1.7366}}
\put(1170,105){\makebox(0,0)[lb]{{\SetFigFont{8}{9.6}{\rmdefault}{\mddefault}{\updefault}to $v_{p-1,(j+(p-1)i) \mod p}$}}}
\put(1611,346){\makebox(0,0)[lb]{{\SetFigFont{8}{9.6}{\rmdefault}{\mddefault}{\updefault}to $v_{1,(j+i) \mod p}$}}}
\put(3087,58){\makebox(0,0)[lb]{{\SetFigFont{8}{9.6}{\rmdefault}{\mddefault}{\updefault}$\vdots$}}}
\put(3180.275,1027.058){\arc{2037.367}{0.1657}{2.0145}}
\path(3487,546)(3614,577)
\blacken\path(3541.024,538.600)(3614.000,577.000)(3531.539,577.459)(3541.024,538.600)
\path(3595,103)(3684,149)
\blacken\path(3622.114,94.501)(3684.000,149.000)(3603.748,130.035)(3622.114,94.501)
\texture{0 115111 51000000 444444 44000000 151515 15000000 444444 
	44000000 511151 11000000 444444 44000000 151515 15000000 444444 
	44000000 115111 51000000 444444 44000000 151515 15000000 444444 
	44000000 511151 11000000 444444 44000000 151515 15000000 444444 }
\shade\path(4491,961)(5597,961)(5597,694)
	(4491,694)(4491,961)
\path(4491,961)(5597,961)(5597,694)
	(4491,694)(4491,961)
\path(3553,393)(3604,418)
\blacken\path(3540.969,364.829)(3604.000,418.000)(3523.363,400.746)(3540.969,364.829)
\texture{44000000 aaaaaa aa000000 8a888a 88000000 aaaaaa aa000000 888888 
	88000000 aaaaaa aa000000 8a8a8a 8a000000 aaaaaa aa000000 888888 
	88000000 aaaaaa aa000000 8a888a 88000000 aaaaaa aa000000 888888 
	88000000 aaaaaa aa000000 8a8a8a 8a000000 aaaaaa aa000000 888888 }
\shade\path(3057,963)(3874,963)(3874,715)
	(3057,715)(3057,963)
\path(3057,963)(3874,963)(3874,715)
	(3057,715)(3057,963)
\put(3095.500,1630.512){\arc{2635.084}{0.6079}{1.8416}}
\shade\path(167,1557)(2727,1557)(2727,1210)
	(167,1210)(167,1557)
\path(167,1557)(2727,1557)(2727,1210)
	(167,1210)(167,1557)
\put(4520,745){\makebox(0,0)[lb]{{\SetFigFont{8}{9.6}{\rmdefault}{\mddefault}{\updefault}$u_{p-1,0}$}}}
\put(3480,747){\makebox(0,0)[lb]{{\SetFigFont{8}{9.6}{\rmdefault}{\mddefault}{\updefault}$u_{0,p-1}$}}}
\put(4995,728){\makebox(0,0)[lb]{{\SetFigFont{8}{9.6}{\rmdefault}{\mddefault}{\updefault}$u_{p-1,p-1}$}}}
\put(4209,742){\makebox(0,0)[lb]{{\SetFigFont{8}{9.6}{\rmdefault}{\mddefault}{\updefault}$u_{i,j}$}}}
\put(3089,739){\makebox(0,0)[lb]{{\SetFigFont{8}{9.6}{\rmdefault}{\mddefault}{\updefault}$u_{0,0}$}}}
\put(4946,881){\blacken\ellipse{8}{8}}
\put(4946,881){\ellipse{8}{8}}
\put(5062,881){\blacken\ellipse{8}{8}}
\put(5062,881){\ellipse{8}{8}}
\put(4741,883){\blacken\ellipse{78}{78}}
\put(4741,883){\ellipse{78}{78}}
\put(5167,882){\blacken\ellipse{78}{78}}
\put(5167,882){\ellipse{78}{78}}
\put(263,870){\blacken\ellipse{8}{8}}
\put(263,870){\ellipse{8}{8}}
\put(384,870){\blacken\ellipse{8}{8}}
\put(384,870){\ellipse{8}{8}}
\put(500,870){\blacken\ellipse{8}{8}}
\put(500,870){\ellipse{8}{8}}
\put(179,872){\blacken\ellipse{78}{78}}
\put(179,872){\ellipse{78}{78}}
\put(605,871){\blacken\ellipse{78}{78}}
\put(605,871){\ellipse{78}{78}}
\put(1225,864){\blacken\ellipse{8}{8}}
\put(1225,864){\ellipse{8}{8}}
\put(1346,864){\blacken\ellipse{8}{8}}
\put(1346,864){\ellipse{8}{8}}
\put(1462,864){\blacken\ellipse{8}{8}}
\put(1462,864){\ellipse{8}{8}}
\put(1141,866){\blacken\ellipse{78}{78}}
\put(1141,866){\ellipse{78}{78}}
\put(1567,865){\blacken\ellipse{78}{78}}
\put(1567,865){\ellipse{78}{78}}
\put(2217,869){\blacken\ellipse{8}{8}}
\put(2217,869){\ellipse{8}{8}}
\put(2338,869){\blacken\ellipse{8}{8}}
\put(2338,869){\ellipse{8}{8}}
\put(2454,869){\blacken\ellipse{8}{8}}
\put(2454,869){\ellipse{8}{8}}
\put(2133,871){\blacken\ellipse{78}{78}}
\put(2133,871){\ellipse{78}{78}}
\put(2559,870){\blacken\ellipse{78}{78}}
\put(2559,870){\ellipse{78}{78}}
\put(4825,881){\blacken\ellipse{8}{8}}
\put(4825,881){\ellipse{8}{8}}
\put(4611,877){\blacken\ellipse{8}{8}}
\put(4611,877){\ellipse{8}{8}}
\put(4495,877){\blacken\ellipse{8}{8}}
\put(4495,877){\ellipse{8}{8}}
\put(2304,1350){\blacken\ellipse{78}{78}}
\put(2304,1350){\ellipse{78}{78}}
\put(1359,1350){\blacken\ellipse{78}{78}}
\put(1359,1350){\ellipse{78}{78}}
\put(417,1350){\blacken\ellipse{78}{78}}
\put(417,1350){\ellipse{78}{78}}
\put(1359,1829){\blacken\ellipse{78}{78}}
\put(1359,1829){\ellipse{78}{78}}
\put(3718,1816){\blacken\ellipse{78}{78}}
\put(3718,1816){\ellipse{78}{78}}
\put(3213,883){\blacken\ellipse{78}{78}}
\put(3213,883){\ellipse{78}{78}}
\put(4374,877){\blacken\ellipse{8}{8}}
\put(4374,877){\ellipse{8}{8}}
\path(1348,1856)(1352,1381)
\blacken\path(1331.327,1460.829)(1352.000,1381.000)(1371.326,1461.166)(1331.327,1460.829)
\path(1356,1826)(2273,1385)
\blacken\path(2192.236,1401.648)(2273.000,1385.000)(2209.572,1437.696)(2192.236,1401.648)
\path(182,910)(398,1318)
\blacken\path(378.245,1237.939)(398.000,1318.000)(342.893,1256.655)(378.245,1237.939)
\path(1130,907)(1346,1315)
\blacken\path(1326.245,1234.939)(1346.000,1315.000)(1290.893,1253.655)(1326.245,1234.939)
\path(2124,910)(2294,1322)
\blacken\path(2281.974,1240.420)(2294.000,1322.000)(2244.998,1255.677)(2281.974,1240.420)
\path(596,911)(422,1318)
\blacken\path(471.838,1252.302)(422.000,1318.000)(435.058,1236.578)(471.838,1252.302)
\path(1553,899)(1379,1306)
\blacken\path(1428.838,1240.302)(1379.000,1306.000)(1392.058,1224.578)(1428.838,1240.302)
\path(2548,912)(2323,1322)
\blacken\path(2379.021,1261.489)(2323.000,1322.000)(2343.954,1242.245)(2379.021,1261.489)
\path(3220,926)(3700,1789)
\path(3747,907)(3719,1789)
\path(4157,917)(3723,1799)
\path(4712,926)(3747,1803)
\path(5146,921)(3761,1808)
\drawline(3429,1295)(3429,1295)
\path(3415,1295)(3443,1332)
\blacken\path(3410.673,1256.139)(3443.000,1332.000)(3378.776,1280.276)(3410.673,1256.139)
\path(3733,1267)(3733,1327)
\blacken\path(3753.000,1247.000)(3733.000,1327.000)(3713.000,1247.000)(3753.000,1247.000)
\path(3994,1239)(3961,1299)
\blacken\path(4017.078,1238.541)(3961.000,1299.000)(3982.029,1219.264)(4017.078,1238.541)
\path(4321,1271)(4269,1313)
\blacken\path(4343.802,1278.292)(4269.000,1313.000)(4318.669,1247.174)(4343.802,1278.292)
\path(4540,1304)(4484,1341)
\blacken\path(4561.772,1313.586)(4484.000,1341.000)(4539.722,1280.213)(4561.772,1313.586)
\put(3606,878){\blacken\ellipse{8}{8}}
\put(3606,878){\ellipse{8}{8}}
\put(3490,878){\blacken\ellipse{8}{8}}
\put(3490,878){\ellipse{8}{8}}
\put(3369,878){\blacken\ellipse{8}{8}}
\put(3369,878){\ellipse{8}{8}}
\put(4175,877){\blacken\ellipse{78}{78}}
\put(4175,877){\ellipse{78}{78}}
\put(3749,878){\blacken\ellipse{78}{78}}
\put(3749,878){\ellipse{78}{78}}
\put(4070,876){\blacken\ellipse{8}{8}}
\put(4070,876){\ellipse{8}{8}}
\path(1364,1822)(3689,1822)
\blacken\path(3609.000,1802.000)(3689.000,1822.000)(3609.000,1842.000)(3609.000,1802.000)
\put(1317,1886){\makebox(0,0)[lb]{{\SetFigFont{8}{9.6}{\rmdefault}{\mddefault}{\updefault}$r$}}}
\put(3793,1864){\makebox(0,0)[lb]{{\SetFigFont{8}{9.6}{\rmdefault}{\mddefault}{\updefault}$r'$}}}
\put(1423,1393){\makebox(0,0)[lb]{{\SetFigFont{8}{9.6}{\rmdefault}{\mddefault}{\updefault}$c_i$}}}
\put(522,709){\makebox(0,0)[lb]{{\SetFigFont{8}{9.6}{\rmdefault}{\mddefault}{\updefault}$v_{0,p-1}$}}}
\put(0,712){\makebox(0,0)[lb]{{\SetFigFont{8}{9.6}{\rmdefault}{\mddefault}{\updefault}$v_{0,0}$}}}
\put(1002,709){\makebox(0,0)[lb]{{\SetFigFont{8}{9.6}{\rmdefault}{\mddefault}{\updefault}$v_{i,0}$}}}
\put(1454,709){\makebox(0,0)[lb]{{\SetFigFont{8}{9.6}{\rmdefault}{\mddefault}{\updefault}$v_{i,p-1}$}}}
\put(2400,713){\makebox(0,0)[lb]{{\SetFigFont{8}{9.6}{\rmdefault}{\mddefault}{\updefault}$v_{p-1,p-1}$}}}
\put(1968,721){\makebox(0,0)[lb]{{\SetFigFont{8}{9.6}{\rmdefault}{\mddefault}{\updefault}$v_{p-1,0}$}}}
\put(3954,876){\blacken\ellipse{8}{8}}
\put(3954,876){\ellipse{8}{8}}
\put(3833,876){\blacken\ellipse{8}{8}}
\put(3833,876){\ellipse{8}{8}}
\put(1037,1352){\blacken\ellipse{8}{8}}
\put(1037,1352){\ellipse{8}{8}}
\put(921,1352){\blacken\ellipse{8}{8}}
\put(921,1352){\ellipse{8}{8}}
\put(254,1398){\makebox(0,0)[lb]{{\SetFigFont{8}{9.6}{\rmdefault}{\mddefault}{\updefault}$c_0$}}}
\put(2353,1368){\makebox(0,0)[lb]{{\SetFigFont{8}{9.6}{\rmdefault}{\mddefault}{\updefault}$c_{p-1}$}}}
\put(800,1352){\blacken\ellipse{8}{8}}
\put(800,1352){\ellipse{8}{8}}
\put(1995,1348){\blacken\ellipse{8}{8}}
\put(1995,1348){\ellipse{8}{8}}
\put(1879,1348){\blacken\ellipse{8}{8}}
\put(1879,1348){\ellipse{8}{8}}
\path(1356,1839)(440,1368)
\blacken\path(502.000,1422.369)(440.000,1368.000)(520.291,1386.796)(502.000,1422.369)
\put(1758,1348){\blacken\ellipse{8}{8}}
\put(1758,1348){\ellipse{8}{8}}
\end{picture}
}
    \end{center}
        \caption{The graph $G(S)$. Edges are bought from the nodes they exit from. Notice that nodes in grey boxes are clique-connected (with arbitrary orientations), and for the sake of readability we have only inserted edges leading to node $u_{i,j}$.}
\label{fig:lower bound sqrt n}
\end{figure}

We claim that the diameter of $G'$ is 2. The eccentricity
of $r$ is 2 as $T$ has height 2, $T'$ has height 1, and $G'$
contains the edge $(r,r')$. Observe that the subgraphs of $G'$
induced by $C$ and $U_i$, for all $i\in[p-1]$, are all cliques of
$p$ vertices. Furthermore, by construction, there is an edge
linking each vertex $u\in \bar U$ with some $v \in V_i$, for every
$i\in[p-1]$, and thus $V_i$ dominates $\bar U$. Therefore, the
eccentricity of each vertex in $C$ is 2. As a consequence, to
prove that $G'$ has diameter 2, it is enough to prove that
\begin{enumerate}
\item[(i)] $U_i$ dominates $\bar V$, for every $i\in[p-1]$ (so as
each vertex in $\bar U$ would have eccentricity 2), \item[(ii)]
for every pair $v \in V_i$ and $v'\in V_{i'}$, $i,i'\in[p-1],i\neq
i'$, there is a vertex $u \in \bar U$ such that $(v,u),(v',u)\in
E$ (so as each vertex in $\bar V$ would have eccentricity 2).
\end{enumerate}

To prove (i), simply observe that for every $i',j'\in[p-1]$, there
always exists a $j\in[p-1]$ such that $j+i'i\equiv j' \pmod{p}$,
and thus, $(u_{i,j},v_{i',j'}) \in E$. To prove (ii), observe that
for every $v_{i,j},v_{i',j'}\in \bar V$, with $i\neq i'$, there
always exists $i'',j''\in[p-1]$ such that $j''+ii''\equiv j
\pmod{p}$ and $j''+i'i''\equiv j' \pmod{p}$ as $p$ is a prime
number (simply choose $i''$ such that $i''(i-i')\equiv(j-j') \pmod
p$). Therefore, $(v_{i,j},u_{i'',j''}),(v_{i',j'},u_{i'',j''})\in
E$.

To complete the proof, it remains to show that there exists a strategy profile 
$S$ such that $G(S)=G'$ and $G(S)$ is stable.
Let $\bar V$ and $\bar U$ be the set of leaves of $T$ and $T'$,
respectively. Let $S'$ be a strategy profile where:
\begin{itemize}
\item each vertex in $\bar V\cup\{r\}$ buys all edges
incident to it (thus, each vertex in $\bar V\cup\{r\}$ buys
exactly $p+1$ edges), 
\item each vertex in $\bar U$ buys the edge towards $r'$,
\item each of the remaining vertices buys no edge, i.e., $S'_v=\emptyset$ for every $v \in C\cup\{r'\}$.
\end{itemize}

Let $S$ be any strategy profile that extends $S'$ such that $G(S)=G'$. Observe that 
\begin{itemize}
\item $r'$ buys no edge in $S^{r,S'}$,
\item each vertex $v$ in $C\cup \bar U$ buys exactly $p$ edges in $S^{v,S'}$,
\item each vertex $v$ in $\bar V\cup\{r\}$ buys exactly $p+1$ edges in $S^{v,S'}$.
\end{itemize}

First of all observe that
$$
|\bar N_{S^{v,S'}_{\neg v}}^2(v)\cap \bar V|=
\begin{cases}
0 & \text{if $v=r'$;}\\
p^2 & \text{if $v=r$;}\\
p^2-1 & \text{if $v \in \bar V$;}\\
p(p-1) & \text{otherwise.}
\end{cases}
$$

Let $\hat V\subseteq \bar V$ be such that $|\hat
V|\in\{p^2,p^2-1,p(p-1)\}$, and let $X$ be a set of vertices that
dominates $\hat V$ in $G(S)$.
\begin{claim}
$|X|\geq \lceil\frac{|\hat
V|}{p}\rceil$ where equality holds only if $X\subseteq C$ or $X\subseteq\bar U$.
\end{claim}
\begin{proof}
Let $X$ be a set of vertices that dominates $\hat V$ in $G(S)$ and 
observe that $X\subseteq \hat V \cup
C\cup\bar U$. As any vertex of $G(S)$ can dominate at most $p$
vertices of $\hat V$, we have that $|X|\geq \lceil\frac{|\hat
V|}{p}\rceil$. 

Now we prove that if $|X|=\lceil\frac{|\hat
V|}{p}\rceil$ then either $X\subseteq C$ or $X\subseteq\bar U$. Indeed,
any set $X'$ dominating $\hat V$ in $G(S)$ and containing any
vertex of $\hat V$ has size $|X'|\geq 1+\lceil\frac{|\hat
V|-1}{p}\rceil >\lceil\frac{|\hat V|}{p}\rceil$, where the first
inequality holds because  any vertex in $\hat V$ dominates only
itself while the second inequality holds by the choice of $|\hat
V|$ and because $p\geq 3$. Furthermore, any set $X'$ containing $0<k<\lceil\frac{|\hat
V|}{p}\rceil$ vertices of $C$ and $\lceil\frac{|\hat
V|}{p}\rceil-k$ vertices of $\bar U$ can dominate at most
$pk+p(\lceil\frac{|\hat V|}{p}\rceil-k)-k(\lceil\frac{|\hat
V|}{p}\rceil-k)=p\lceil\frac{|\hat V|}{p}\rceil-k\lceil\frac{|\hat
V|}{p}\rceil+k^2\leq p \lceil\frac{|\hat
V|}{p}\rceil-\lceil\frac{|\hat V|}{p}\rceil + 1\leq|\hat V|-1$
vertices of $\hat V$ (and so $X'$ cannot dominate $\hat V$), where
the last inequality holds by the choice of $|\hat V|$ and because $p\geq 3$. 
\qed
\end{proof}

As a consequence of the above claim, if $|X|=\lceil\frac{|\hat V|}{p}\rceil$, i.e., either $X
\subseteq C$ or $X \subseteq \bar{U}$, then either $X$ does not
dominate $r$ in $G(S)$ or $X$ does not dominate $r'$ in $G(S)$. This implies that,
$$
\gamma(S^{v,S'}_{\neg v},v,2)=
\begin{cases}
0 & \text{if $v=r'$;}\\
p & \text{if $v \in C\cup\bar U$;}\\
p+1 & \text{otherwise.}
\end{cases}
$$
Since $|S^{v,S'}_v|=\gamma(S^{v,S'}_{\neg v},v,2)$ for each player $v\in V$, from
Proposition~\ref{prop: bounds for S_v max game} we have that $v$ is in equilibrium w.r.t. $S^{v,S'}$. Therefore, by Proposition~\ref{prop: sufficient conditions for NE}, 
$G(S)$ is stable.

\qed
\end{proof}

\section{\textsc{SumBD}}
\label{sct:sum}

\subsection{Upper bounds}
For \textsc{SumBD}, we start by giving an upper bound to the PoA
similar to the one obtained for \textsc{MaxBD}. For the remaining of this section we use $D$ to
denote the average bound of every node, namely $D=B/n$.

\begin{theorem}\label{th:UB sum}
The PoA of \textsc{SumBD} is $O(n^{\frac{1}{\lfloor\log_3
D/4\rfloor+2}})$ for $D\ge 3$, and $O\left( \sqrt{n \log {n}}
\right)$, when $2 \le D < 3$.
\end{theorem}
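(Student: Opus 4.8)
The plan is to mirror the proof of Theorem~\ref{th:UB PoA Max} as closely as possible, replacing the \emph{maximum}-distance domination machinery by an \emph{average}-distance analogue. I first pin down the social optimum: since $D\ge 2$ forces $B=Dn\ge 2n>2n-3$, the observation in Section~\ref{sct:preliminary} makes the spanning star stable, and being a tree it also minimizes the number of edges; hence a social optimum costs exactly $n-1$. As the social cost of any \textsc{SumBD} NE equals its number of edges, it suffices to upper bound $\mathit{SC}(S)/(n-1)$ for an arbitrary stable $G(S)$. I would then introduce the analogue of the parameter $\gamma$ of Lemma~\ref{lm:PoA le gamma}: let $k=\Theta(D)$ and let $\gamma$ be the minimum size of a dominating set $U$ of $G(S)^{k}$ such that a node buying all edges towards $U$ is guaranteed to stay within its average bound. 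The essential difficulty is that, unlike in \textsc{MaxBD}, one cannot demand domination of \emph{all} nodes within a small radius, since a single distant node ruins the maximum distance while barely affecting the average; instead I would invoke Markov's inequality to show that a within-bound node keeps a constant fraction of $V$ within distance $\Theta(D)$, and run the whole argument on this ``bulk''. This average-to-maximum conversion is exactly where the constant $D/4$ (and the extra unit in the exponent relative to \textsc{MaxBD}) is lost.

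With $\gamma$ so defined, I would reprove the analogue of Lemma~\ref{lm:PoA le gamma}: take a spanning forest rooted at $U$, let every node additionally buy the $\gamma$ edges towards $U$, and apply Lemma~\ref{lm:aux}, charging the few far nodes either to the forest $H$ or to a lower-order additive term. This gives $\mathit{SC}(S)=O(\gamma\, n)$, so that the whole task reduces to proving $\gamma$ small.

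The upper bound on $\gamma$ comes from a ball-growth argument for the power $G(S)^{k}$, reproving the Claim inside Theorem~\ref{th:UB PoA Max}: a maximal $\Theta(D)$-separated set of nodes on a sphere around $u$ dominates the relevant bulk (hence has size at least $\gamma$) and induces pairwise disjoint balls, yielding $\beta_{\Theta(k)}\ge\min\{n,\gamma\,\beta_{k'}\}$ for a smaller radius $k'$. Iterating this $\lfloor\log_3 D/4\rfloor$ times and then placing pairwise disjoint balls at a maximal independent set of $G(S)^{k}$ gives $n\ge\gamma^{\lfloor\log_3 D/4\rfloor+2}$, i.e. $\gamma\le n^{1/(\lfloor\log_3 D/4\rfloor+2)}$, which combined with $\mathit{SC}(S)=O(\gamma n)$ yields the first bound. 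For the regime $2\le D<3$ the exponent above is at most $1$ and the bound is vacuous, so I would instead argue as in the $R=2$ case of Theorem~\ref{th:UB PoA Max}: after using Markov to reduce to domination of a constant fraction of $V$, combine the minimum-degree estimate $\mathit{SC}(S)=O(\delta\, n)$ in the spirit of Corollary~\ref{lm:PoA le delta} with the Alon--Spencer bound $\gamma(G)\le\frac{n}{\delta+1}H_{\delta+1}$, and balance the two over the minimum degree $\delta$ to get $O(\sqrt{n\log n})$.

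The main obstacle is precisely the average-to-maximum conversion underlying the domination parameter. In \textsc{MaxBD}, being within bound \emph{verbatim} makes a node's neighbourhood a dominating set of $G^{R-1}$ and makes buying a whole dominating set feasible; for the average bound both statements fail in the presence of a few far nodes. Making the Markov truncation interface cleanly with the multiplicative ball-growth recursion -- so that the discarded far nodes cost only a constant factor in the effective radius and a lower-order additive term in $\mathit{SC}(S)$ -- is the technical heart of the argument and the source of the $D/4$ and the $+2$ in the exponent.
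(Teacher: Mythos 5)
Your skeleton is the right one (ball-growth recursion, packing disjoint balls at a maximal independent set of $G^{D-1}$, and the min-degree/Alon--Spencer balance for $2\le D<3$), and one ingredient you worry about is actually free: a dominating set of $G(S)^{D-1}$ \emph{does} certify the average bound, since eccentricity at most $D$ implies average distance at most $D$, so the analogue of Lemma~\ref{lm:PoA le gamma} goes through verbatim with $R$ replaced by $D$ and no far-node charging at all. The genuine gap is in your central mechanism, the Markov truncation. When you must verify that a node $x$, after buying edges towards the candidate hub set, is within its bound, the bound is on the sum of distances to \emph{all} of $V$: the nodes you discard still count at full weight, and their contribution is $\Theta(B)$, not lower-order (already $n/2$ nodes at distance about $2D$ contribute about $Dn=B$). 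Consequently ``dominating the bulk'' does not certify membership within the average bound, your constrained parameter $\gamma$ is not lower-bounded by the size of a set that merely dominates the bulk, and the recursion $\beta_{\Theta(k)}\ge\min\{n,\gamma\,\beta_{k'}\}$ as you state it does not follow. No amount of ``constant factor in the effective radius plus lower-order additive term'' fixes this: the game is strict, so the deviating node's broadcast cost must be at most $B$ exactly, with zero slack.

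The paper closes this with a comparison anchored at the within-bound center $u$ rather than a truncation. It picks $T$ maximal among nodes at distance \emph{exactly} $2k+2$ from $u$ with pairwise distances at least $2k+1$; maximality yields, for every $v\notin\beta_{3k+2}(u)$, some $t\in T$ with $d_G(t,v)\le d_G(u,v)-2$. Decomposing $G$ into shortest-path trees rooted at $T\cup\{u\}$ and letting each $x$ buy edges to all hubs, every far node saves at least $2$ (net $1$ after the new edge) while every near node loses at most $1$; the recursion is run only while $|\beta_{3k+2}(u)|\le n/2$ --- note the paper's claim reads $\beta_{3k+2}\ge\min\{n/2+1,\lfloor\rho\rfloor\beta_k\}$, with threshold $n/2+1$, not $n$ --- so far nodes are at least as numerous as near ones and $B_{H+E_x}(x)\le B_G(u)\le B$ holds exactly. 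Lemma~\ref{lm:aux} then bounds $\rho=\mathit{SC}(S)/(n-1)$ by $|T|+1$ directly, with no dominating-set parameter in the recursion at all. Two smaller points: the base case $\beta_1\ge\lfloor\rho\rfloor$ and the $2\le D<3$ case both rest on $\rho\le\delta+1$, which for the average bound cannot be imported from Corollary~\ref{lm:PoA le delta} (a within-bound node's neighborhood need not dominate $G^{D-1}$); the paper reproves it by having every $x$ buy edges to all neighbors of a minimum-degree node $v$, so that $B_{T+E_x}(x)\le B_{G(S)}(v)\le B$ along a shortest-path tree $T$ rooted at $v$ --- again a comparison, and Markov appears nowhere. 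Finally, the $D/4$ and the $+2$ in the exponent come from radius arithmetic ($\beta_{2\cdot 3^x-1}\ge\lfloor\rho\rfloor^{x+1}$, packed into disjoint balls of radius at most $(D-2)/2$ at an independent set of $G^{D-1}$), not from any average-to-maximum loss.
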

\begin{proof}
Let $G=G(S)$ be a stable graph, and let $\rho=SC(S)/(n-1)$. Remind
that the ball of radius $k$ centered at a node $u$ is defined as
$\beta_k(u)=\{v \mid d_G(u,v) \le k\}$. Moreover, let
$\beta_k=\min_{u \in V} |\beta_k(u)|$. We have the following
\begin{claim}
For any $k \ge 1$, we have $\beta_{3k+2} \ge \min \{n/2+1,\lfloor
\rho \rfloor \beta_k\}$.
\end{claim}
\begin{proof}
Consider the ball $\beta_{3k+2}(u)$ centered at any given node
$u$, and assume that $|\beta_{3k+2}(u)| \le n/2$. Let $T$ be the
maximal set of nodes at distance exactly $2k+2$ from $u$ and
subject to the distance between any pair of nodes in $T$ being at
least $2k+1$. We claim that for every node $v \notin
\beta_{3k+2}(u)$, there is a vertex $t \in T$ with $d_G(t,v)\le
d_G(u,v) -2$. Indeed, consider the node $t'$ in the shortest path
between $v$ and $u$ at distance exactly $2k+2$ from $u$. If $t'
\in T$ the claim trivially holds, otherwise consider the node $t
\in T$ that is closest to $t'$. From the maximality of $T$ we have
that $d_G(t,v) \le d_G(t,t')+d_G(t',v) \le 2k+d_G(u,v)-(2k+2) \le
d_G(u,v)-2$.

Let $H$ be the forest consisting of the following disjoint trees.
For every node $t\in T \cup\{u\}$, let $U_t$ be the nodes that are
closer to $t$ than any other $t' \in T \cup\{u\}$, and let $F_t$
be the subtree of the shortest path tree of $G$ rooted at $t$
spanning $U_t$. As a consequence, since $u$ is within the bound in $G$, it
is easy too see that every vertex $x$ is within the bound in $H \cup
\{(x,t) \mid t \in (T \cup\{u\})\setminus \{x\}\}$. Hence, From
Lemma \ref{lm:aux}, we have that $\rho < |T|+1$ and hence $|T|+1
\ge \lfloor \rho \rfloor$. Now, all the balls centered at nodes in
$T\cup\{u\}$ with radius $k$ are all pairwise disjoint. Then:
$$
|\beta_{3k+2}(u)|\ge |\beta_k(u)|+\sum_{t \in T} |\beta_k(t)| \ge
\lfloor \rho \rfloor \beta_k.
$$
\qed
\end{proof}

Now, observe that $\beta_1 \ge \lfloor \rho \rfloor$. Then, after
using the above claim $x$ times, we obtain
$$
\beta_{2 \, 3^{x}-1} \ge \min \{n/2+1,\lfloor \rho
\rfloor^{x+1}\}.
$$

Let us consider the case $R\ge 3$ first. Let $U$ be a maximal
independent set of $G^{D-1}$. Since $U$ is also a dominating set
of $G^{D-1}$, it holds that $|U| \ge \lfloor \rho \rfloor$. We
consider the $|U|$ balls centered at nodes in $U$ with maximal
radius at most $(D-2)/2$. Since $U$ is an independent set of
$G^{D-1}$, all balls are pairwise disjoint and hence we have $n
\ge |U| \lfloor \rho \rfloor^{\lfloor\log_3 D/4\rfloor + 1} \ge
\lfloor \rho \rfloor^{\lfloor\log_3 D/4\rfloor + 2}$. As a
consequence, we obtain $\lfloor \rho \rfloor \le
n^{\frac{1}{\lfloor\log_3 D/4\rfloor}}$, and the claim follows.

Now assume $2\le D < 3$. To use the same argument used for
\textsc{MaxBD}, it suffices to prove that for any stable graph
$G(S)$ with minimum degree $\delta$, it holds that
$\frac{SC(S)}{n-1} \le \min\{\delta+1, O(\gamma(G^{D-1})) \}$. The
upper bound $\frac{SC(S)}{n-1}=O(\gamma(G^{D-1}))$ can be proved
by using the same arguments used in the proof of Lemma \ref{lm:PoA
le gamma} where we exchange the role of $R$ with $D$. Now, we
prove that $\frac{SC(S)}{n-1} \le \delta +1$. Let $v$ be a node
with degree $\delta$, and let
$N_{G(S)}(v)=\{u_1,\dots,u_{\delta}\}$. Consider a shortest path
tree $T$ of $G(S)$ rooted at $v$. Clearly, $v$ is within the bound
in $T$, and if we define $E_x=\{(x,u_j) \mid 1 \le j \le \delta
\}$ for any $x \neq v$, we have $B_{T+E_x}(x) \le B_{G(S)}(v)\le
B$. Hence, from Lemma \ref{lm:aux}, if follows that
$\mathit{SC}(S)\le |E(T)| + (n-1)\delta \le (\delta+1) (n-1)$.
\qed
\end{proof}

From the above result, it follows that the PoA becomes constant when $D=\Omega(n^{\epsilon})$, for some $\epsilon>0$. We now show how to lower such a threshold to $D=2^{\omega(\sqrt{\log n})} = n^{\omega\left(\frac{1}{\sqrt{\log n}}\right)}$ (and we also improve the upper bound when $D=\omega(1) \cap o(3^{\sqrt{\log n}})$).

\begin{lemma}\label{lm:Bcost>B-n}
Let $G(S)$ be stable and let $v$ be a node such that
$B_{G(S)}(v) \le B-n$, then $\mathit{SC}(S) \le 2(n-1)$.
\end{lemma}
\begin{proof}
Let $T$ be the shortest path of $G$ rooted at $v$. The claim
immediately follows from Lemma \ref{lm:aux} by observing that $v$
is within the bound in $T$ and every other node $u$ is within the bound in
$T+(u,v)$. \qed
\end{proof}

Notice that the above Lemma shows that when a stable graph $G$ has
diameter at most $D-1$ then the social cost of $G$ is at most
twice the optimum. Now, the idea is to provide an upper bound to
the diameter of any stable graph $G$ as function of $\delta$,
where $\delta$ is minimum degree of $G$. Then we combine this
bound with Lemma \ref{lm:Bcost>B-n} in order to get a better upper
bound to PoA for interesting ranges of $D$.

The proof of the following theorem follows the schema of that of
Theorem 9 in \cite{ADH10}.

\begin{theorem}\label{th:diam SUM}
Let $G$ be stable with minimum degree $\delta$. Then the diameter
of $G$ is $2^{O(\sqrt{\log n})}$ if $\delta=2^{O(\sqrt{\log
n})}$, and $O(1)$ otherwise.
\end{theorem}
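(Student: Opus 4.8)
The plan is to bound the diameter of a stable graph $G$ with minimum degree $\delta$ by showing that balls grow by a multiplicative factor of roughly $\delta$ each time the radius increases by a bounded amount, so that after few expansion steps a ball must swallow more than half of the vertices. Concretely, I would first establish a growth claim in the spirit of the claim inside the proof of Theorem~\ref{th:UB sum}: there is a constant $c$ and an integer step size $s$ such that for every node $u$ with $|\beta_{k+s}(u)|\le n/2$, we have $|\beta_{k+s}(u)|\ge \delta\cdot|\beta_k(u)|/c$ (or an additive/multiplicative variant). The mechanism is the standard one from equilibrium arguments: if a ball around $u$ were small and grew slowly, then $u$ could rewire a bounded number of its $\delta$ incident edges to nearby "hub" vertices and substantially shrink its broadcast cost, contradicting stability via Lemma~\ref{lm:aux}. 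I would lean on the fact that the neighborhood of any node, being within the bound, already forces a certain density, and that shortest-path trees let a node cheaply shortcut to a maximal well-separated set $T$ exactly as in the proof of Theorem~\ref{th:UB sum}.

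Second, I would iterate the growth bound. Starting from $\beta_0(u)=1$ and multiplying by a $\delta$-factor every $s$ steps, after $x$ iterations one gets $\beta_{sx}(u)\gtrsim (\delta/c)^{x}$ for as long as the ball stays below $n/2$. Solving $(\delta/c)^{x}\ge n/2$ gives $x=O(\log_\delta n)$, hence every ball of radius $O(s\log_\delta n)$ contains more than half the vertices; since any two such majority balls intersect, the diameter is $O(s\log_\delta n)=O(\log n/\log\delta)$ once $\delta$ is a large enough constant, yielding the $O(1)$ regime when $\delta=n^{\Omega(1)}$ (and more generally when $\log\delta=\Omega(\log n)$). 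For the delicate small-$\delta$ regime I expect the step size $s$ itself to depend on $\delta$ — following the recursion in Theorem~9 of~\cite{ADH10}, the expansion argument only guarantees growth over a window whose length scales like the current radius, producing a recurrence of the form $\mathrm{diam}\le s\cdot\mathrm{diam}'+\mathrm{(lower\ order)}$ that unrolls to $2^{O(\sqrt{\log n})}$ rather than a clean logarithm.

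The hard part will be making this recursive telescoping precise: one must balance the number of expansion rounds against the radius increment per round so that the product stays $2^{O(\sqrt{\log n})}$, and this is exactly where the $\sqrt{\log n}$ exponent (and the threshold $\delta=2^{O(\sqrt{\log n})}$ separating the two regimes) emerges. I would therefore set up a two-parameter induction on $(\text{diameter budget},\ \delta)$ mirroring~\cite{ADH10}, choosing the window length to be about $\sqrt{\log n}$ when $\delta$ is small so that the number of rounds is also about $\sqrt{\log n}$, whence the diameter is $2^{O(\sqrt{\log n})}$; when $\delta=2^{\omega(\sqrt{\log n})}$ the same bookkeeping collapses the number of rounds to $O(1)$. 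The main obstacle is adapting the \textsc{Sum}NCG edge-swap argument of~\cite{ADH10} to our budget-free, bounded-average-distance equilibrium notion, ensuring that the cost saving a node can realize by shortcutting to $T$ (as quantified through Lemma~\ref{lm:aux} and the broadcast-cost bound $B$) is large enough to force the claimed ball growth; verifying this quantitative gain is the only step I expect to require genuine care rather than routine estimation.
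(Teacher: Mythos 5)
Your high-level plan coincides with the paper's: the paper explicitly follows the schema of Theorem~9 of \cite{ADH10}, proving a ball-growth inequality whose window scales with the current radius and solving the resulting recurrence to get $2^{O(\sqrt{\log n})}$, exactly as you anticipate. However, the one mechanism you actually sketch for extracting ball growth from stability is wrong for this game. You argue that if a ball around $u$ grew slowly, then $u$ ``could rewire a bounded number of its $\delta$ incident edges to nearby hub vertices and substantially shrink its broadcast cost, contradicting stability.'' In \textsc{SumBD} a player's cost is the \emph{number of edges it buys}, subject to a hard bound $B$ on its broadcast cost; a node already within the bound gains nothing from decreasing its distance-sum, so shrinking the broadcast cost is not a profitable deviation and yields no contradiction. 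A profitable deviation must save at least one bought edge while staying within the bound, and slow ball growth around $u$ by itself provides no such deviation for $u$ (who may own no edges at all). You do flag ``verifying this quantitative gain'' as the step requiring genuine care, but the proposal contains no idea that closes this gap, and the contradiction it does propose would fail.

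The missing content is the paper's two-lemma swap argument, which converts a hypothetical large distance-sum improvement into an edge saving for a \emph{different} player. First, if the diameter exceeds $2\log n$, then within distance $\log n$ of every node there is a player $x$ buying $\delta/c$ edges whose deletion increases $x$'s distance-sum by only $2n(1+\log n)$; this comes from a BFS-level argument (take the first level $i$ with $n_{i+1}<2n_i$, count at least $\delta n_i/2-3n_i$ non-tree edges among levels $i-1,i,i+1$, and pigeonhole over the at most $\tfrac{7}{2}n_i$ vertices there). Second (Lemma~\ref{lem:add delta/c' edges}), no node can decrease its distance-sum by more than $5n\log n$ by adding $\delta/c'$ incident edges: otherwise $x$ drops its $\delta/c$ nearly-redundant edges and buys the improving set re-rooted at itself, losing only $O(n\log n)$ from re-rooting, thus remaining within $B$ while saving an edge --- contradicting equilibrium. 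Only with this $5n\log n$ cap does one obtain $|T|\geq k\delta/(20c\log n)$ and the recurrence $\beta_{4k}\geq\min\bigl\{n/2+1,\tfrac{k\delta}{20c\log n}\beta_k\bigr\}$. Note also that the true growth factor is $k\delta/(20c\log n)$, with the $k$-dependence (offset by the $\log n$) doing the work, rather than a clean factor $\delta/c$ per bounded window as in your first claim; your fallback bound $O(\log n/\log\delta)$ would place the constant-diameter regime only at $\delta=n^{\Omega(1)}$, which does not by itself reproduce the theorem's dichotomy at the threshold $\delta=2^{\Theta(\sqrt{\log n})}$.
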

\begin{proof}
We start by proving two lemmas:

\begin{lemma}
Let $G$ be stable with minimum degree $\delta$. Then either $G$
has diameter at most $2 \log n$ or, for every node $u$, there is a
node $x$ with $d_G(u,x)\le \log n$ such that (i) $x$ is buying
$\delta /c$ edges (for some constant $c>1$), and (ii) the removal
of these edges increases the sum of distances from $x$ by at most
$2n(1+\log n)$.
\end{lemma}
\begin{proof}
Assume that the diameter of $G$ is greater than $2 \log n$ and
consider a node $u$. Let $U_j$ be the set of nodes at distance
exactly $j$ from $u$ and let $n_j=|U_j|$. Moreover, denote by $T$
the shortest path tree of $G$ rooted at node $u$. Let $i$ be the
minimum index such that $n_{i+1}< 2n_i$ ($i$ must exist since the
height of $T$ is greater than $\log n$). Consider the set of edges
$F$ of $G$ having both endpoints in $U_{i-1}\cup U_i \cup U_{i+1}$
and that do not belong to $T$. Then, $|F| \ge \delta n_i /2 -
3n_i$. Moreover, we have that $n_{i-1}+n_i+n_{i+1} \le 1/2 \,
n_i+n_i+2n_i=7/2 n_i$. As a consequence, there is a vertex $x \in
U_{i-1}\cup U_i \cup U_{i+1}$ which is buying at least $\frac{n_i
/2 - 3n_i}{7/2 n_i} \ge \delta /c$ edges of $F$, for some constant
$c>1$. Moreover, when $x$ removes these edges, the distance to any
other node $y$ increases by at most $2(1+\log n)$ because
$d_T(x,y) \le 2(1+\log n)$. The claim follows. \qed
\end{proof}

\begin{lemma}\label{lem:add delta/c' edges}
In any stable graph $G$, there is a constant $c'>1$ the addition
of $\delta/c'$ edges all incident to a node $u$ decreases the sum
of distances from $u$ by at most $5n \log n$.
\end{lemma}
\begin{proof}
If $G$ has diameter at most $2 \log n$, then the claim trivially
holds. Otherwise, let $x$ be the node of the previous Lemma and
let $c'$ be such that $\delta/c' \le \delta/c-1$. Moreover, assume
by contradiction that the sum of distances from $u$ decreases by
more than $5n\log n$ when we add to $G$ the following set of edges
$F=\{(u,v_1),\dots,(u,v_h)\}$, with $h=\delta/c'$. Then, let
$F'=\{(x,v_j) \mid j=1,\dots,h \}$. We argue that $x$ can improves
his cost by saving at least an edge as follows: $x$ deletes its
$\delta/c$ edges and adds $F'$. Indeed, the sum of distances from
$x$ increases by at most $2n(1+\log n) \le 4n \log n$ and
decreases by at least $5n \log n - n\log n$, since for every node
$y$ such that the shortest path in $G+F$ from $u$ to $y$ passes
through $x$, we have that $d_G(u,y)-d_{G+F}(u,y) \le \log n$.
Hence, $x$ is still within the bound in $G+F'$ and is saving at least one
edge: a contradiction. \qed
\end{proof}

Recall that the ball of radius $k$ centered at a node $u$ is
defined as $\beta_k(u)=\{v \mid d_G(u,v) \le k\}$. Moreover, let
$\beta_k=\min_{u \in V} |\beta_k(u)|$. We claim that
\begin{equation}\label{eq:ricorrenza potenziata}
\beta_{4k} \ge \min\{n/2+1, \frac{k \delta}{20 c \log n} \beta_k
\},
\end{equation}

\noindent for some constant $c >1$. To prove that, consider the
ball $\beta_{4k}(u)$ centered at any given node $u$, and assume
that $|\beta_{4k}(u)| \le n/2$. Let $T$ be the maximal set of
nodes at distance exactly $2k+1$ from $u$ and subject to the
distance between any pair of nodes in $T$ being at least $2k+1$.
It is easy to see that, from the maximality of $T$, for every node
$v \notin \beta_{3k}$ there is a node $t \in T$ such that
$d_{G}(v,t)\le d_G(u,v)-k$. We assumed that at least $n/2$ nodes
have distance more than $3k$. This implies that there must be a
set $T'\subseteq T$ of size $\delta/c$ such that at least $n
\delta/2|T|$  such vertices $v$ whose distance is at most
$d(u,v)-k$ from some node in $T'$. If we add $\delta/c$ edges from
$u$ to nodes in $T'$, the sum of distances from $u$ decreases by
at least $(k-1)n/2|T| \ge kn/4|T|$. By Lemma \ref{lem:add
delta/c' edges} this improvement is at most $5n \log n$. As a
consequence we have that $|T| \ge \delta k /(20 c \log n)$.
Moreover, all the balls centered at nodes in $T$ are disjoint, and
this proves (\ref{eq:ricorrenza potenziata}). Now, the claim
follows by solving the recurrence (\ref{eq:ricorrenza
potenziata}).\qed
\end{proof}

By using the above theorem along with Lemma \ref{lm:Bcost>B-n}, and observing that if $G(S)$ is stable and has minimum degree $\delta$, then $\frac{SC(S)}{n-1} \le \delta +1$, as shown in the proof of Theorem \ref{th:UB sum}, we have:

\begin{theorem}\label{th:UB sum2}
The PoA of \textsc{SumBD} is $2^{O(\sqrt{\log n})}$ if
$D=\omega(1)$, and $O(1)$ if $D=2^{\omega(\sqrt{\log n})}$.
\qed\end{theorem}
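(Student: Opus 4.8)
The plan is to bound the normalized social cost $\mathit{SC}(S)/(n-1)$ of an arbitrary stable graph $G(S)$; since every feasible profile is connected and hence the social optimum has at least $n-1$ edges, this quantity dominates the PoA. I would combine two ready-made ingredients. First, whenever the diameter $\Delta$ of $G(S)$ satisfies $\Delta \le D-1$, then for every node $v$ one has $B_{G(S)}(v) \le \Delta(n-1) \le (D-1)(n-1) < (D-1)n = B-n$, so Lemma~\ref{lm:Bcost>B-n} immediately gives $\mathit{SC}(S)/(n-1) \le 2$. Second, the bound $\mathit{SC}(S)/(n-1) \le \delta+1$ recalled in the proof of Theorem~\ref{th:UB sum}, where $\delta$ is the minimum degree of $G(S)$. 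The whole argument is then a case analysis on $\delta$ relative to the threshold $\tau = 2^{\Theta(\sqrt{\log n})}$ that Theorem~\ref{th:diam SUM} uses to separate its two diameter regimes.

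For the regime $D=\omega(1)$, if $\delta \le \tau$ I would simply invoke the degree bound, getting $\mathit{SC}(S)/(n-1) \le \delta+1 \le \tau+1 = 2^{O(\sqrt{\log n})}$. If instead $\delta > \tau$, Theorem~\ref{th:diam SUM} yields $\Delta = O(1)$; since $D=\omega(1)$ eventually exceeds every constant, we have $\Delta \le D-1$ for $n$ large, and the first ingredient gives $\mathit{SC}(S)/(n-1) \le 2$. Taking the worse of the two subcases gives PoA $=2^{O(\sqrt{\log n})}$.

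For the regime $D = 2^{\omega(\sqrt{\log n})}$, the large-degree subcase $\delta > \tau$ is identical: $\Delta = O(1) \le D-1$, hence $\mathit{SC}(S)/(n-1) \le 2$. The small-degree subcase $\delta \le \tau$ is where the two regimes differ: here the degree bound only gives $\delta + 1 = 2^{O(\sqrt{\log n})}$, which is not $O(1)$, so instead I would use the diameter route. Theorem~\ref{th:diam SUM} gives $\Delta = 2^{O(\sqrt{\log n})}$, and since $D=2^{\omega(\sqrt{\log n})}$ grows strictly faster than any such term, $\Delta \le D-1$ holds for $n$ large and the first ingredient again yields $\mathit{SC}(S)/(n-1) \le 2$. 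Thus PoA $=O(1)$.

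The heavy lifting is entirely contained in Theorem~\ref{th:diam SUM} and Lemma~\ref{lm:Bcost>B-n}, so the only delicate point is making the growth-rate comparisons rigorous: one must check that the constant hidden in the diameter's $2^{O(\sqrt{\log n})}$ bound is genuinely dominated by the superconstant exponent of $D = 2^{\omega(\sqrt{\log n})}$, guaranteeing $\Delta \le D-1$ for all sufficiently large $n$. This is exactly what the $\omega(\cdot)$-versus-$O(\cdot)$ separation provides, and it is precisely this failure of domination in the small-$\delta$ subcase of the first regime that forces the degree bound (rather than the diameter bound) to be used there, accounting for the $2^{O(\sqrt{\log n})}$ factor in that case.
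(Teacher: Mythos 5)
Your proposal is correct and follows exactly the paper's intended argument: the paper proves this theorem by the same combination of Theorem~\ref{th:diam SUM}, Lemma~\ref{lm:Bcost>B-n} (via the observation that diameter at most $D-1$ forces $\mathit{SC}(S)\le 2(n-1)$), and the bound $\mathit{SC}(S)/(n-1)\le\delta+1$ from the proof of Theorem~\ref{th:UB sum}. You have merely made explicit the case analysis on the minimum degree $\delta$ relative to the $2^{\Theta(\sqrt{\log n})}$ threshold, which the paper leaves implicit.
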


Then, by combining the results of Theorems \ref{th:UB sum} and \ref{th:UB sum2}, we get the bounds reported in Table \ref{UBsum}.

\subsection{Lower bounds}
We can finally prove the following theorem.

\begin{theorem}\label{th:LB sum}
For any $\epsilon >0$ and for $2n-3\leq  B=o(n^2)$,
the PoA of \textsc{SumBD} is at least $2-\epsilon$.
\end{theorem}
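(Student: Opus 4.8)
The plan is to construct an explicit family of stable graphs whose social cost approaches $2(n-1)$, matching the lower bound $2-\epsilon$ already obtained for \textsc{MaxBD}. Since a spanning star has social cost $n-1$ and is a social optimum for \textsc{SumBD} when $B \ge 2n-3$ (as observed in the preliminary section), it suffices to exhibit, for each $\epsilon$, a stable graph with roughly $2(n-1)$ edges. The natural candidate mirrors the \textsc{MaxBD} lower-bound construction: a long path $u_1, u_2, \dots, u_m$ of ``core'' vertices together with a large set of ``pendant'' vertices $v_1, \dots, v_h$, each of the $v_i$ buying exactly two edges (to the two endpoints $u_1$ and $u_m$ of the path), while each path vertex buys a single edge to its successor. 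This yields a total of $(m-1) + 2h$ edges on $n = m + h$ vertices, so as $h \to \infty$ with $m$ fixed, the social cost ratio tends to $2(n-1)/(n-1) = 2$.

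\medskip

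The main steps would be, first, to choose the path length $m$ as a function of $B$ so that the broadcast-cost constraint $B_{G(S)}(v) \le B$ holds for every vertex, exploiting the hypothesis $B \ge 2n-3$ to give enough slack. The broadcast cost of a pendant vertex $v_i$ is dominated by its distances to the path interior and to the other pendants (each reachable in two or three hops through $u_1$ or $u_m$), while a path endpoint sees the whole pendant set at distance $1$ or $2$. I would verify that with the given range $2n-3 \le B = o(n^2)$ the worst-case broadcast cost stays within $B$ for all $n$ large enough; the condition $B = o(n^2)$ ensures that $B$ is not so large that a denser cheaper configuration becomes forced, keeping the star as the social optimum.

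\medskip

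The second and decisive step is to prove stability, i.e.\ that no player can strictly reduce its number of bought edges while staying within the bound. For a pendant vertex buying two edges, I must argue that buying a single edge (to any one vertex) leaves its average distance above the threshold $D = B/n$; here the flexibility in choosing $m$ and the slack between $B$ and $2n-3$ is what I would tune so that one edge is insufficient but two suffice. For a path vertex buying one edge, removal disconnects the graph or violates the bound, so it cannot improve. I would state and verify the per-player broadcast costs explicitly and check the single deviation $|S_v'| < |S_v|$ in each case, invoking the characterization of equilibrium directly rather than through an auxiliary proposition.

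\medskip

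The hard part will be the simultaneous balancing in the stability argument: I need the bound $B$ to be loose enough that every vertex is comfortably within the bound in $G(S)$ (so the configuration is feasible), yet tight enough that each pendant vertex is genuinely forced to buy two edges rather than one (so the configuration is an equilibrium rather than merely feasible). Threading this needle across the entire interval $2n-3 \le B = o(n^2)$ is the technical crux; I expect to handle it by selecting the path length $m$ growing slowly with $B$ and then letting $h$ dominate, so that the ratio $(m-1+2h)/(n-1) \to 2$ while the feasibility and non-improvability inequalities both hold for all sufficiently large $n$.
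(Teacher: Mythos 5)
There is a genuine gap here: the \textsc{MaxBD} gadget does not survive the passage from maximum to average distance, because clustering all $h$ pendants at a single location makes your feasibility window and your stability window disjoint as soon as $B$ is superlinear. Concretely, in your graph the middle path vertex $u_{\lceil m/2\rceil}$ is at distance roughly $m/2$ from each of the $h\approx n$ pendants, so feasibility of the profile forces $B\gtrsim nm/2$. On the other hand, a pendant $v_i$ deviating to the single edge $(v_i,u_1)$ obtains broadcast cost $(n-1)+B_{G-v_i}(u_1)\approx(n-1)+(h-1)+m^2/4\approx 2n+m^2/4$, since $u_1$ is adjacent to all other pendants and the path together with the two-hop pendant shortcut between $u_1$ and $u_m$ behaves like a cycle of circumference $m+1$. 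Pendant stability therefore requires $B<2n+m^2/4$. With $m=o(n)$ (which you need for the edge ratio $(m-1+2h)/(n-1)$ to approach $2$), the two conditions $nm/2\le B<2n+m^2/4$ are compatible only for $m\le 4$ up to lower-order terms, so your family is stable only for $B=\Theta(n)$: no choice of $m$ ``growing slowly with $B$'' can reach, say, $B=n\log n$, which lies squarely inside the claimed range $2n-3\le B=o(n^2)$. (A secondary slip: removing a path edge does \emph{not} disconnect your graph, since the pendants close a cycle, so even the path-vertex argument must go through the bound rather than through connectivity.)

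The paper's construction avoids exactly this degeneracy by distributing the degree-two players uniformly: its gadget $G_{k,h}$ consists of $k$ hubs $u_0,\dots,u_{k-1}$ arranged in a cycle (buying nothing) with $h$ parallel players between each consecutive pair of hubs, each buying two edges. By symmetry all players of the same type have identical broadcast costs ($\lambda(k)$ for the degree-two players, $\bar\lambda(k)\le\lambda(k)$ for the hubs), both $\Theta(nk)$, and the cheapest one-edge deviation costs $\lambda'(k)=\bar\lambda(k)+n-1-k$, also $\Theta(nk)$; hence the window $\big[\lambda(k),\lambda'(k)\big)$ in which the graph is simultaneously feasible and stable is nonempty. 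The part your proposal is missing entirely is the covering argument: the paper proves $\lambda(2)\le 2n-3$, $\lambda(k+1)\le\lambda'(k)$, and $\lambda(\Omega(n))=\Omega(n^2)$, so that the windows for $k=2,3,\dots$ tile the whole interval $[2n-3,\,o(n^2))$; the restriction $B=o(n^2)$ corresponds to $k=o(n)$, which is precisely what lets $h\to\infty$ and the edge ratio tend to $2$. Finally, note that $B=o(n^2)$ has nothing to do with ``keeping the star the social optimum'' as you suggest --- the spanning star is a social optimum for every $B\ge 2n-3$, since any feasible graph is connected and thus has at least $n-1$ edges.
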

\begin{proof}
To prove the theorem, we use the following scheme. First, for
every integer $k\geq 2$, we provide a family ${\cal G}_k$ of
graphs that are stable when
$B\in\big[\lambda(k,n),\lambda'(k,n)\big)$, where $n$ is the size
of the graph and $\lambda(k,n)$ and $\lambda'(k,n)$ are functions
that depend on $k$ and $n$. We also prove that the social cost of
infinitely many graphs in ${\cal G}_k$ is at least $2-\epsilon$
far from the social cost of an optimum, for every $k=o(n)$. Then,
we show that $\lambda(2,n)\leq 2n-3$, $\lambda(k+1,n)\leq
\lambda'(k,n)$, and
$\lambda(\Omega(n),n)=\Omega(n^2)$.

Family ${\cal G}_k$ contains a graph $G_{k,h}$ for every positive
integer $h$. More precisely, $G_{k,h}$ has $n_{k,h}=(h+1)k$
vertices and $m_{k,h}=2kh$ edges. Therefore, the lower bound of
$2-\epsilon$ for the PoA when $k=o(n)$ follows by choosing $h\geq
\frac{2}{\epsilon}-1$. For the rest of the proof, we assume that
$h$ is an arbitrary, but fixed, positive integer. Moreover, with a
little abuse of notation, we will drop the subscript $h$ from
$G_{k,h}$ and $n_{k,h}$ and we will also drop the parameter
$n_k=n_{k,h}$ as argument of the two functions $\lambda$ and
$\lambda'$.

The graph $G_k$ is a highly symmetric graph consisting of $k$
players $\{u_0,\ldots,u_{k-1}\}$ which buy no edge, and, for every
$i=0,\ldots,k-1$, there are $h$ copies of a player (we denote by
$v_i$ any of such players) each buying exactly two edges: one
towards $u_{i}$ and one towards $u_{i+1\bmod k}$. Observe that
$G_k$ has diameter $k$.

The broadcast cost of each player $v_i$ is exactly $\lambda(k)$
while the broadcast cost of each player $u_j$ is equal to $\bar
\lambda(k)$. It is easy to see that $\lambda(2)=2n_2-4,\bar
\lambda(2)=n_2$.

Moreover, one can observe that for every $k\geq 2$
$$
\lambda(k+1)=\lambda(k)+n_k+
\begin{cases}
h & \text{if $k+1$ is even;}\\
1 & \text{if $k+1$ is odd,}
\end{cases}
$$

\noindent as well as

$$
\bar \lambda(k+1)=\bar \lambda(k)+n_k+
\begin{cases}
1 & \text{if $k+1$ is even;}\\
h & \text{if $k+1$ is odd.}
\end{cases}
$$

As each player $v_i$ owns exactly two edges, the only strategy
$v_i$ has to connect to $G_k-v_i$ with exactly one edge, is that
of connecting either to some $v_{i}'$ or to some $u_j$. Therefore,
a lower bound on the broadcast cost  of player $v_i$ if he uses
only one edge to connect to $G_k-v_i$ is $\lambda'(k)=
\min\{\lambda(k),\bar\lambda(k)\}+n_k-1-k$, as $G_k$ has diameter
$k$. Therefore, we have that $G_k$ is stable for every
$B\in\Big[\max\{\lambda(k),\bar\lambda(k)\},\lambda'(k)\Big)$. In
what follows, we show that
$\max\{\lambda(k),\bar\lambda(k)\}=\lambda(k)$, thus proving that
$B\in\big[\lambda(k),\lambda'(k)\big)$, as well as
$\lambda'(k)=\bar \lambda(k)+n_k-1-k$.

Indeed, for every $k\geq 2$, and using $n_{k+1}=n_k+h+1$,  we
have that
$$
\lambda(k+2)=\lambda(k)+2n_{k+1} \text{\,\,\,\, and \,\,\,\,}
\bar\lambda(k+2)=\bar\lambda(k)+2n_{k+1}.
$$
Furthermore, using the relations $n_{k+1}=n_k+h+1, n_k=(h+1)k$,
and the formulas above, $\lambda(3)=2n_3-3,
\bar\lambda(3)=\frac{5}{3}n_3-1$. Therefore, $\bar \lambda(2)\leq
\lambda(2)$ and $\bar \lambda(3)\leq \lambda(3)$. As a
consequence, for every $k\geq 2$, $\bar
\lambda(k+2)=\bar\lambda(k)+2n_{k+1}\leq
\lambda(k)+2n_{k+1}=\lambda(k+2)$. Therefore
$\max\{\lambda(k),\bar\lambda(k)\}=\lambda(k)$.

To complete the proof, it remains to show that $\lambda(2)\leq
2n_2-3$, $\lambda(k+1)\leq \lambda'(k)$, and
$\lambda(\Omega(n))=\Omega(n^2)$. We already proved that
$\lambda(2)=2n_2-4\leq 2n_2-3$. Moreover, for $n_k=2k$, i.e.,
$k=n_k/2$, we have that $G_k$ is a cycle of $2k$ vertices, and
thus the broadcast cost of any vertex is
$\Omega(k^2)=\Omega(n^2)$. Finally, using induction, and observing
that $\lambda(3)\leq \lambda'(2)$, we can prove that
$\lambda(k+1)\leq \lambda'(k)$. Indeed, if $k+1$ is even, then
\begin{eqnarray*}
\lambda(k+1) &=& \lambda(k)+n_k+h\leq \lambda'(k-1)+n_k+h\\
&=&\bar\lambda(k-1)+n_{k-1}+n_{k}-1-k+h\\
&=&\bar\lambda(k)+n_k-1-k=\lambda'(k),
\end{eqnarray*}
while, if $k+1$ is odd, then
\begin{eqnarray*}
\lambda(k+1) &=& \lambda(k)+n_k+1\leq \lambda'(k-1)+n_k+1\\
&=&\bar\lambda(k-1)+n_{k-1}+n_{k}-1-k+1\\
&=&\bar\lambda(k)+n_k-1-k=\lambda'(k).
\end{eqnarray*}
\qed
\end{proof}

\section{Concluding remarks}\label{sct:conclusions}
\hide{The following fact shows an upper bound to the diameter of a
graph in which all nodes are within the bound.

\begin{fact}\label{fact:<=2D}
Let $G$ be a graph such that all nodes are within the bound in \textsc{SumBD}. Then, the
diameter of $G$ is at most $2D$.
\end{fact}
\begin{proof}
Let us consider a node $v$ with eccentricity $k$, and
let $u$ be a node such that $d_G(v,u)=k$. For every node $x$ with
$d_G(v,x)=i$, we have that $d_G(u,i) \ge k-i$, and hence
$d_G(v,x)+d_G(u,x) \ge k$. Therefore:
$$
2Dn= 2 B \ge \sum_{x\in V}d_G(v,x) + \sum_{x\in V}d_G(u,x) =
\sum_{x\in V}(d_G(v,x)+ d_G(u,x)) \ge \sum_{x\in V} k =nk.
$$
\qed
\end{proof}}

In this paper, we have introduced a new NCG model in which the emphasis is put on the fact that a player might have a strong requirement about its centrality in the resulting network, as it may well happen in decentralized computing (where, for instance, the bound on the maximum distance could be used for synchronizing a distributed algorithm). 
We developed a systematic study on the PoA of the two (uniform) games \textsc{MaxBD} and \textsc{SumBD}, which, however, needs to be continued, since a significant gap between the corresponding lower and upper bounds is still open. In particular, it is worth to notice that finding a better upper bound to the PoA would provide a 
better estimation about how much dense a network in equilibrium can be.

Actually, in an effort of reducing such a gap, we focused on \textsc{MaxBD}, and we observed the following fact: 
Recall that a graph is said to be \emph{self-centered} if every node is a
center of the graph (thus, the eccentricity of every node is
equal to the radius of the graph, which then coincides
with the diameter of the graph). An interesting consequence of Lemma
\ref{lm:PoA le gamma} is that only stable graphs that are
self-centered can be dense, as one can infer from the following

\begin{proposition}
Let $G(S)$ be a NE for \textsc{MaxBD} such that $G(S)$ is not self-centered. Then,
$\mathit{SC}(S) \le 2(n-1)$.
\end{proposition}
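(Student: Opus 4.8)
The plan is to invoke Lemma~\ref{lm:PoA le gamma}, which bounds $\mathit{SC}(S)$ by $(\gamma+1)(n-1)$ where $\gamma=\gamma(G(S)^{R-1})$ is the size of a minimum dominating set of the $(R-1)$-st power of $G(S)$. It then suffices to show that $\gamma=1$, i.e., that a single vertex dominates $G(S)^{R-1}$; the bound $\mathit{SC}(S)\le 2(n-1)$ follows immediately by plugging $\gamma=1$ into that lemma.

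First I would record two facts about eccentricities. Since $S$ is a NE, every node is within the bound, so $\varepsilon_{G(S)}(v)\le R$ for every $v$, and hence the diameter of $G(S)$ is at most $R$. On the other hand, being self-centered is equivalent to having radius equal to diameter: if every node is a center then all eccentricities coincide with the common radius, so the maximum eccentricity—the diameter—equals the radius; conversely, if radius equals diameter then every eccentricity is squeezed between them and equals this common value, making every node a center. Consequently, the hypothesis that $G(S)$ is \emph{not} self-centered gives radius $<$ diameter $\le R$, and since eccentricities are integers the radius is at most $R-1$.

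Next I would select a center $c$ of $G(S)$, i.e., a node attaining the minimum eccentricity. By the previous step $\varepsilon_{G(S)}(c)\le R-1$, so $d_{G(S)}(c,u)\le R-1$ for every $u\in V$. By the definition of the $(R-1)$-st power, this means $c$ is adjacent in $G(S)^{R-1}$ to every other vertex, so $\{c\}$ is a dominating set of $G(S)^{R-1}$ and therefore $\gamma(G(S)^{R-1})=1$. Substituting $\gamma=1$ into Lemma~\ref{lm:PoA le gamma} yields $\mathit{SC}(S)\le(1+1)(n-1)=2(n-1)$, as claimed.

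The argument is essentially immediate once Lemma~\ref{lm:PoA le gamma} is available; the only point requiring a moment of care is the passage from ``not self-centered'' to ``radius $\le R-1$'', which combines the structural equivalence (self-centered $\iff$ radius $=$ diameter) with the NE-induced bound diameter $\le R$. I expect no genuine obstacle beyond verifying this equivalence, so the main value of the statement is conceptual: it isolates self-centeredness as a necessary structural feature of any \emph{dense} stable graph.
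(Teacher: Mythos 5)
Your proof is correct and follows essentially the same route as the paper's: the paper also picks a node $v$ of minimum eccentricity, notes that non-self-centeredness forces $\varepsilon_{G(S)}(v) \le R-1$ (so $\{v\}$ dominates $G(S)^{R-1}$), and concludes via Lemma~\ref{lm:PoA le gamma} with $\gamma=1$. Your write-up merely makes explicit the radius-versus-diameter argument that the paper leaves implicit in the phrase ``it must be that $\varepsilon_{G(S)}(v)\le R-1$'', which is a fine elaboration rather than a different approach.
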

\begin{proof}
Let $v$ be a node with minimum eccentricity. It must be that
$\varepsilon_{G(S)}(v) \le R - 1$. Then, $U=\{v\}$ is a dominating
set of $G^{R-1}$, and Lemma \ref{lm:PoA le gamma} implies the
claim.\qed
\end{proof}

\noindent
Thus, to improve the lower bound for \textsc{MaxBD}, one has to look to self-centered graphs. 
Moreover, if one wants to establish a lower bound of $\rho$, then a stable graph of minimum degree $\rho-1$ (from Corollary \ref{lm:PoA le delta}) 
is needed.
Starting from these observations, we investigated the possibility to use small and suitably dense self-centered graphs as \emph{gadgets} to build lower bound instances for increasing values of $R$. To illustrate the process, see Figure \ref{fig:pallone}, where using a self-centered cubic graph of diameter 3 and size 20, we have been able to obtain a lower bound of 3 (it is not very hard to see that the obtained graph is in equilibrium).

\begin{figure}
    \begin{center}
        \setlength{\unitlength}{0.00066667in}
\begingroup\makeatletter\ifx\SetFigFont\undefined%
\gdef\SetFigFont#1#2#3#4#5{%
  \reset@font\fontsize{#1}{#2pt}%
  \fontfamily{#3}\fontseries{#4}\fontshape{#5}%
  \selectfont}%
\fi\endgroup%
{\renewcommand{\dashlinestretch}{30}
\begin{picture}(3524,2334)(0,-10)
\put(2898,2027){\circle{8}}
\put(3019,2027){\circle{8}}
\put(3135,2027){\circle{8}}
\drawline(1620.000,269.000)(1715.442,279.575)(1808.706,302.446)
	(1898.212,337.227)(1982.448,383.328)(2059.990,439.970)
	(2129.526,506.197)(2189.879,580.886)(2240.030,662.776)
	(2279.130,750.482)(2306.518,842.520)(2321.731,937.334)
	(2324.512,1033.320)(2314.813,1128.855)(2292.798,1222.324)
	(2258.841,1312.146)(2213.514,1396.802)(2157.586,1474.860)
	(2092.000,1545.000)
\drawline(368.000,624.000)(396.343,531.908)(435.479,443.859)
	(484.850,361.114)(543.749,284.856)(611.332,216.177)
	(686.633,156.061)(768.574,105.366)(855.982,64.818)
	(947.607,34.999)(1042.137,16.334)(1138.219,9.091)
	(1234.478,13.373)(1329.538,29.119)(1422.037,56.104)
	(1510.652,93.942)(1594.115,142.090)(1671.230,199.861)
	(1740.896,266.427)(1802.114,340.835)(1854.008,422.021)
	(1895.837,508.824)(1927.001,600.000)
\drawline(203.000,1545.000)(143.658,1470.053)(94.259,1388.209)
	(55.596,1300.781)(28.287,1209.169)(12.770,1114.841)
	(9.293,1019.308)(17.913,924.102)(38.492,830.747)
	(70.699,740.740)(114.018,655.523)(167.756,576.460)
	(231.052,504.820)(302.890,441.751)(382.120,388.261)
	(467.473,345.209)(557.581,313.285)(651.000,293.000)
\drawline(1360.000,2017.000)(1287.962,2080.233)(1208.455,2133.773)
	(1122.775,2176.747)(1032.317,2208.455)(938.557,2228.379)
	(843.023,2236.196)(747.271,2231.777)(652.863,2215.195)
	(561.336,2186.720)(474.184,2146.815)(392.826,2096.133)
	(318.587,2035.498)(252.680,1965.898)(196.176,1888.469)
	(149.999,1804.472)(114.899,1715.276)(91.449,1622.335)
	(80.032,1527.164)(80.833,1431.313)(93.839,1336.346)
	(118.839,1243.810)(155.424,1155.213)(203.000,1072.000)
\drawline(2777.000,2041.000)(2682.398,2047.057)(2587.628,2044.889)
	(2493.402,2034.511)(2400.431,2016.002)(2309.416,1989.501)
	(2221.040,1955.208)(2135.972,1913.382)(2054.850,1864.336)
	(1978.287,1808.442)(1906.858,1746.119)(1841.103,1677.837)
	(1781.516,1604.111)(1728.546,1525.496)(1682.592,1442.584)
	(1644.000,1356.000)
\blacken\drawline(1657.876,1437.286)(1644.000,1356.000)(1694.496,1421.193)(1666.530,1407.268)(1657.876,1437.286)
\drawline(2092.000,1072.000)(2148.234,1149.683)(2194.187,1233.858)
	(2229.119,1323.170)(2252.468,1416.185)(2263.860,1511.407)
	(2263.112,1607.305)(2250.234,1702.337)(2225.435,1794.976)
	(2189.113,1883.732)(2141.852,1967.179)(2084.412,2043.975)
	(2017.716,2112.885)(1942.836,2172.801)(1860.976,2222.761)
	(1773.453,2261.961)(1681.673,2289.771)(1587.112,2305.743)
	(1491.289,2309.622)(1395.746,2301.344)(1302.019,2281.043)
	(1211.614,2249.045)(1125.984,2205.865)(1046.507,2152.196)
	(974.459,2088.902)(911.000,2017.000)
\drawline(2140.000,1096.000)(2233.456,1108.556)(2324.563,1132.867)
	(2411.850,1168.540)(2493.906,1214.999)(2569.405,1271.493)
	(2637.126,1337.109)(2695.976,1410.786)(2745.003,1491.335)
	(2783.414,1577.452)(2810.590,1667.746)(2826.090,1760.759)
	(2829.665,1854.986)(2821.257,1948.906)(2801.001,2041.000)
\blacken\drawline(2216.859,1125.879)(2140.000,1096.000)(2221.877,1086.195)(2195.557,1103.026)(2216.859,1125.879)
\drawline(2116.000,1049.000)(2207.569,1014.135)(2302.628,990.381)
	(2399.833,978.074)(2497.813,977.388)(2595.181,988.331)
	(2690.564,1010.751)(2782.612,1044.330)(2870.026,1088.593)
	(2951.570,1142.915)(3026.093,1206.529)(3092.542,1278.536)
	(3149.977,1357.918)(3197.588,1443.555)(3234.701,1534.236)
	(3260.793,1628.680)(3275.495,1725.552)(3278.598,1823.485)
	(3270.060,1921.094)(3250.000,2017.000)
\blacken\drawline(2197.765,1038.304)(2116.000,1049.000)(2183.112,1001.084)(2168.107,1028.486)(2197.765,1038.304)
\drawline(3274.000,2041.000)(3187.700,2087.269)(3097.698,2125.844)
	(3004.676,2156.430)(2909.344,2178.796)(2812.426,2192.771)
	(2714.658,2198.250)(2616.785,2195.189)(2519.551,2183.613)
	(2423.695,2163.610)(2329.946,2135.332)(2239.018,2098.994)
	(2151.601,2054.871)(2068.361,2003.301)(1989.930,1944.674)
	(1916.905,1879.438)(1849.841,1808.087)(1789.249,1731.165)
	(1735.588,1649.256)(1689.268,1562.983)(1650.641,1473.003)
	(1619.999,1380.000)
\blacken\drawline(1625.353,1462.288)(1620.000,1380.000)(1663.447,1450.088)(1637.080,1433.332)(1625.353,1462.288)
\put(920,2022){\circle*{78}}
\put(1384,2017){\circle*{78}}
\put(1146,1788){\circle*{78}}
\put(1148,1545){\circle*{78}}
\put(911,836){\circle*{78}}
\put(1384,836){\circle*{78}}
\put(1619,1317){\circle*{78}}
\put(651,1309){\circle*{78}}
\put(1856,1309){\circle*{78}}
\put(2092,1545){\circle*{78}}
\put(2092,1072){\circle*{78}}
\put(1620,269){\circle*{78}}
\put(1927,600){\circle*{78}}
\put(1620,600){\circle*{78}}
\put(675,600){\circle*{78}}
\put(368,600){\circle*{78}}
\put(675,269){\circle*{78}}
\put(439,1309){\circle*{78}}
\put(203,1545){\circle*{78}}
\put(203,1072){\circle*{78}}
\put(2801,2017){\circle*{78}}
\put(3278,2009){\circle*{78}}
\drawline(911,2041)(1148,1781)(1384,2017)
\drawline(1148,1805)(1148,1569)
\drawline(1833,1332)(2092,1072)
\drawline(1148,1569)(911,836)(1620,1309)
	(651,1309)(1384,836)(1148,1569)
\drawline(1384,836)(1620,600)(1927,600)
\drawline(1620,624)(1620,269)
\drawline(911,860)(675,600)(675,246)
\drawline(368,600)(675,600)
\drawline(179,1569)(439,1309)(651,1309)
\drawline(203,1072)(439,1309)
\drawline(1596,1309)(1856,1309)(2092,1545)
\drawline(1384,2017)(2092,1545)
\drawline(2069,1096)(2092,1049)
\drawline(2092,1072)(1927,600)
\drawline(651,269)(1620,269)
\drawline(179,1096)(368,600)
\drawline(179,1545)(911,2017)
\drawline(2801,2041)(2116,1569)
\blacken\drawline(2170.528,1630.861)(2116.000,1569.000)(2193.224,1597.923)(2162.113,1600.774)(2170.528,1630.861)
\drawline(3274,2017)(2140,1545)
\blacken\drawline(2206.172,1594.206)(2140.000,1545.000)(2221.543,1557.277)(2191.700,1566.519)(2206.172,1594.206)
\put(3368,2017){\makebox(0,0)[lb]{{\SetFigFont{8}{9.6}{\rmdefault}{\mddefault}{\updefault}$v_n$}}}
\put(2730,2065){\makebox(0,0)[lb]{{\SetFigFont{8}{9.6}{\rmdefault}{\mddefault}{\updefault}$v_1$}}}
\end{picture}
}
    \end{center}
        \caption{A graph with $n+20$ nodes and $3n+30$ edges, showing a lower bound for the PoA of \textsc{MaxBD} for $R=3$ approaching to 3, as soon as $n$ grows. Edges within the gadget (on the left side) are bought by either of the incident nodes, while other edges are bought from the nodes they exit from.}
\label{fig:pallone}
\end{figure}

Interestingly enough, the gadget is a famous extremal (i.e., maximal w.r.t. node addition) graph arising from the study of the \emph{degree-diameter} problem, namely the problem of
finding a largest size graph having a fixed maximum degree and diameter (for a comprehensive overview of the problem, we refer the reader to \cite{degreediam}). More precisely, the gadget is a graph of largest possible size having maximum degree $\Delta=3$ and diameter $R=3$.
In fact, this seems not to be coincidental, since also \emph{Moore graphs} (which are extremal graphs for $R=2$ and $\Delta=2,3,7,57$), and the extremal graph for $R=4$ and $\Delta=3$ (see  \cite{degreediam}), can be shown to be in equilibrium, and then they can be used as gadgets (clearly, the lower bounds implied by Moore graphs for $R=2$ are subsumed by our result in Theorem \ref{th:PoA R=2}). Notice that from this, it follows that we actually have a lower bound of $3$ for the PoA of \textsc{MaxBD} also for $R=4$.
So, apparently there could be some strong connection between the equilibria for \textsc{MaxBD} and the extremal graphs w.r.t. to the degree-diameter problem, and we plan in the near future to explore such intriguing issue.


\end{document}